\definecolor{shadecolor}{rgb}{0.9, 0.9, 0.86}
\renewcommand{\d}{\mathrm{d} }
\renewcommand{\emph}[1]{\textit{#1}}
\newcommand{\trace}[1]{\textrm{Tr}\left( #1\right)}
\newcommand{\tr}[1]{\textrm{Tr}\left( #1\right)}
\newcommand{\balpha}{\ba}
\newtheorem{conjecture}{Conjecture}
\newcommand{\Volt}{\textrm{Volt}}
\newcommand{\INB}{\textrm{INB}}
\newcommand{\bigzero}[2]{\mbox{\normalfont\Large\bfseries 0}_{{\small#1\times#2}}}
\newcommand{\ba}{\mathbf{a}}
\newcommand{\bx}{\mathbf{x}}
\newcommand{\bp}{\mathbf{p}}
\newcommand{\br}{\mathbf{r}}
\newcommand{\by}{\mathbf{y}}
\newcommand{\di}{\mathrm{d}}
\newcommand{\wt}[1]{\widetilde{#1}}
\newcommand{\R}{\mathbb{R}}
\newcommand{\Z}{\mathbb{Z}}
\newcommand{\N}{\mathbb{N}}
\newcommand{\la}{\left\langle}
\newcommand{\ra}{\right\rangle}
\newcommand{\fL}{\mathfrak{L}}
\newcommand{\C}{\mathbb{C}}
\newcommand{\cC}{\mathcal{C}}
\newcommand{\wo}[1]{\overline{#1}}
\newcommand{\wh}[1]{\widehat{#1}}
\newcommand{\cM}{\mathcal{M}}
\newcommand{\cL}{\mathcal{L}}
\newcommand{\cA}{\mathcal{A}}
\newcommand{\cE}{\mathcal{E}}
\newcommand{\D}{\mathbb{D}}
\def\be{\begin{equation}}
	\def\ee{\end{equation}}
\DeclareMathOperator{\diag}{diag}
\DeclareMathOperator{\rank}{rank}
\DeclareMathOperator{\Div}{div}
\newtheorem{theorem}{Theorem}[section]
\newtheorem{definition}[theorem]{Definition}
\newtheorem{lemma}[theorem]{Lemma}
\newtheorem{remark}[theorem]{Remark}
\newtheorem{proposition}[theorem]{Proposition} 
\newtheorem{corollary}[theorem]{Corollary}
\title{Discrete integrable systems and  random Lax matrices}
\author{T. Grava\footnote{
		International School for Advanced Studies (SISSA), Via Bonomea 265,   34136, Trieste, Italy,  School of Mathematics,  University of Bristol,  Fry Building,  BS8 1UG,   UK,  and  INFN, Sezione di Trieste
\newline
		\textit{Email:} \texttt{grava@sissa.it} 
	}, M. Gisonni\footnote{
		International School for Advanced Studies (SISSA), Via Bonomea 265,   34136, Trieste, Italy \newline
		\textit{Email:} \texttt{mgisonni@sissa.it} 
	}, G. Gubbiotti\footnote{University of Milan, via Cesare Saldini 50, 20133, Milan, Italy, and INFN Sezione di Milano, via G. Celoria 16, 20133 Milan, Italy  \newline
		\textit{Email:} \texttt{giorgio.gubbiotti@unimi.it}},
	G. Mazzuca\footnote{Department of Mathematics, The Royal Institute of Technology, Lindstedts\"agen 25, 114 28, Stockholm, Sweden. \newline
		\textit{Email:} \texttt{mazzuca@kth.se}}}
\date{\today}
\begin{document}
	
	\maketitle
	
	\begin{abstract}
		We  study  properties of  Hamiltonian  integrable
		systems with random initial data by considering their Lax
		representation. Specifically, we investigate the spectral behaviour of
		the  corresponding Lax matrices when the number $N$ of degrees of freedom of the system goes to infinity 
		and the initial data is sampled
		according to a properly chosen Gibbs measure.  We give an exact
		description of the limit density of states for the exponential
		Toda lattice and the Volterra lattice in terms of the Laguerre and antisymmetric Gaussian $\beta$-ensemble in the high temperature regime.
		For generalizations of the Volterra lattice to short range interactions,   called INB
		additive and multiplicative lattices,  the focusing Ablowitz--Ladik lattice and  the focusing Schur flow,   we derive numerically the
		density of states. For all these systems, we obtain explicitly the density of states in the ground states.
	\end{abstract}
	
	\section{Introduction}

	In this manuscript, we  study  properties of  Hamiltonian  integrable systems
	with random initial data by analysing  the spectral properties
	of their Lax matrices considered as \emph{random matrices}. 
	
	One of the first investigations in this direction was made in \cite{Bogomolny2009,Bogomolny2011}, where the authors considered random Lax matrices of various Calogero--Moser systems, and computed their joint eigenvalues densities. 
	More recently, this idea gained new attention thanks to the work of Spohn \cite{Spohn2019a},
	where the author connected the spectrum of the random Lax matrix of the Toda lattice with the sparse matrix of the Gaussian  $\beta$-ensemble
	\cite{Dumitriu2002} at high temperature \cite{Allez2012}.  Later, the
	Lax matrix of the  defocusing Ablowitz--Ladik lattice \cite{Nenciu2005,Simon2005}
	with random initial data was connected with the sparse matrix of
	the circular $\beta$-ensemble \cite{KillipNenciu2004} in the high temperature
	limit \cite{Hardy2020}  by two of the
	present authors \cite{grava2021generalized}, and, independently, by Spohn
	\cite{spohn2021hydrodynamic}. In  the same work, Spohn also considered the defocusing Schur flow \cite{Golinskiui2006}, and he connected it to the Jacobi $\beta$-ensemble \cite{Dumitriu2002} in the high temperature regime \cite{Allez2013}. Further developments on this subject were
	also presented in \cite{mazzuca2022large,guionnet2021large}. We mention  also the work \cite{Yukawa1986}, where the author studied the statistical properties of the energy level of a quantum integrable system analysing the eigenvalues of the Lax operator.
	
	Classically, from the seminal paper of Liouville \cite{Liouville1855}, an
	integrable system is understood as a \emph{Hamiltonian system} possessing
	enough first integrals such that the motion of the system is regular
	and predictable \cite{Arnold1967}.  In the modern geometrical setting
	\cite{Arnold1997,Olver1986}, we say that given a Poisson manifold $\mathcal P$
	of dimension $n$, with local coordinates $\ba =(a_1,\dots, a_n)$, and
	Poisson bracket $\left\{.\, ,. \right\}$ of rank $2r$, the flow generated
	by a scalar   smooth function $H=H(\ba)$:
	\begin{equation}
		\dot{a}_{i} = \frac{\partial a_i}{\partial t} = \left\{ a_{i}, H \right\},
		\quad
		i=1,\dots,n,
		\label{eq:hamflow}
	\end{equation}
	is \emph{integrable} if it possesses $k=n-r$ functionally independent
	first integrals in involution with respect to the given Poisson bracket.
	
	Finding first integrals is often a complicated task,  and during the
	past decades several algorithms to construct them have been
	developed.  One of the  most effective  methods to produce first integrals
	of a given mechanical system is the so-called \emph{Lax pair representation}\footnote{Often $L{\text -}A$ pair in the
		Russian literature.}.  The concept of Lax pair originates from the work
	of P.~D.~Lax on the theory of  PDEs \cite{Lax1968}, see also \cite{Ablowitz1981,Novikov}.
%
	 In the finite dimensional setting,
	the construction runs as follows \cite{Babelon,Arnold1987book}:
	assume  there  are $N\times N $  matrices, $L=L(\ba)$ and $A=A(\ba)$ with $N=N(n)$ and such that the equation
	%
	\begin{equation} 
		\label{Lax0} 
		\dot{L}=[A,L],
		\quad
		[A,L]=AL-LA\,,
	\end{equation}
	is equivalent to the Hamiltonian flow \eqref{eq:hamflow}. Then 
	the matrices $L$ and $A$ are a Lax pair for the Hamiltonian system  and the matrix $L$ is called   \emph{Lax matrix}.
	The main consequence of the  Lax equation \eqref{Lax0}   is that the eigenvalues of $L$ 
	are first integrals of the Hamiltonian flow \eqref{eq:hamflow}.
	So, provided that we can prove these eigenvalues give enough
	functionally independent quantities in involution, we can infer the
	integrability of  Hamilton's equations \eqref{eq:hamflow} through the Lax pair.
	
	The fact that in many cases an integrable system is equivalent to
	a matrix relation gives the connection with random matrix theory.
	Indeed, when the initial data $\ba(0)$ is chosen randomly, the Lax matrix $L=L(\ba(0))$ becomes 
	a random matrix.
	To define a random initial data, we consider  \emph{invariant measures} with respect to the Hamiltonian flow. In general, such objects have the  form
	\begin{equation}
		\di\mu=m\left( \ba \right)\d a_{1}\wedge\dots\wedge \di a_{n},
		\label{volumeM}
	\end{equation}
	where  the density  $m(\ba)$ is such that the measure $\di\mu\in L^{1}(\mathcal{P})$.
		Hamiltonian systems have a natural invariant measure, the so called  \emph{Gibbs measure} \cite{Khinchin49} obtained from the Hamiltonian itself
	\begin{equation}
		\di\mu_H = \dfrac{1}{Z_H}e^{-\beta H(\ba)}\di\mu,
		\label{eq:gibbs}
	\end{equation}
	with normalization constant $Z_{H}$
	\begin{equation}
		Z_H=\int_{\mathcal{P}}e^{-\beta H(\ba)}\di\mu<\infty
		\label{eq:ZHnorm}
	\end{equation}
	so that $\di\mu_H$ becomes effectively a probability measure.  If the measure is not normalizable in the whole phase
	space $\mathcal P$, one needs to restrict the measure  to a suitable submanifold $\mathcal M$.
	Because of the integrability of the system, we can consider
	a more general probability measure, called generalized Gibbs measure
	\begin{equation}
		\label{Gibbs_generalize} 
		\di\mu_G=\dfrac{1}{Z_G}e^{-\sum_{j=1}^{k}\beta_{j}H_{j}(\ba)}\di\mu,
	\end{equation} 
	where $\beta_1$, \dots, $\beta_k$ are constants and $H_{1}$, \dots, $H_{k}$
	are the first integrals.
	As above, we assume that the normalization constant $Z_{G}$ is finite,
	\begin{equation}
		Z_G=\int_{\mathcal{P}}e^{- \sum_{j=1}^k\beta_jH_j(\ba)}\di\mu<\infty.
		\label{eq:ZGnorm}
	\end{equation}

	In this manuscript, we explore the behaviour of the spectrum of the Lax matrices  of various relevant integrable systems
	when the number of degrees of freedom $N\to\infty$, and
	the initial data is sampled according to a properly chosen Gibbs measure.
	Our main result is that we can compute the density of states for the
	Lax matrix of the exponential Toda lattice and the Volterra lattice.
	This is done through a one-to-one correspondence with the Laguerre
	$\beta$-ensemble at high temperature and with the antisymmetric Gaussian
	$\beta$-ensemble at high temperature, respectively. These are two
	known classes of random matrix ensembles, see \cite{mazzuca2021mean}
	and \cite{Dumitriu_Forrester,Mazzuca2021} respectively.  We consider  other
	relevant cases of integrable systems,  namely the focusing Ablowitz--Ladik lattice \cite{Ablowitz1974,Ablowitz1975}, the focusing Schur flow, and a class of integrable generalization of the Volterra lattice 
	to short range interactions, called the Itoh--Narita--Bogoyavleskii (INB) additive and multiplicative lattices \cite{Bogoyavlensky1991}. In these  cases  the corresponding random  
	Lax matrices are not symmetric nor self-adjoint and 
	we derive numerically their density of states that  has support in the
	complex plane. Interesting patterns of the density of states emerge as we
	vary the parameters of the system.  Finally, for all the 
	integrable systems analysed in this manuscript,  we are able to compute the density of states in the low-temperature limit, namely in the ground state.
	
	This manuscript  is organized as follows. In section \ref{Preliminaries} we give
	an introduction to the basic tools of  the integrable
	system theory  needed in this manuscript.  Next, we study the density
	of states of the random Lax matrices of the exponential Toda lattice,
	the Volterra lattice, the INB lattices, the Ablowitz--Ladik lattice,
	and the Schur flow in sections \ref{toda}, \ref{volterra}, \ref{sec:INB},
	\ref{abllad}, and \ref{schur} respectively.
	Finally, in section \ref{concl} we give some conclusions and an outlook
	for future developments on this topic.

	\section{Preliminaries}
	\label{Preliminaries}

	In this section, we recall some standard tools to study  Hamiltonian integrable systems that we need throughout the manuscript . For further details, we refer to various textbooks and
	monographs \cite{Arnold1997,Arnold1987book,Babelon,Olver1986}.
	
	A \emph{Poisson manifold} is a pair $(\mathcal P,\, \left\{ .\,,. \right\})$
	where $\mathcal P$ is a $n$-dimensional differentiable manifold and
	$\left\{ .\,,. \right\}$ is an antisymmetric bilinear operation on  the space 
	${\mathcal C}^\infty (\mathcal P)$  
	of smooth functions over $\mathcal P$,
	\begin{equation}
		\label{d-pb0}
		\begin{split}
			{\mathcal C}^\infty (\mathcal P)\times &{\mathcal C}^\infty (\mathcal P) \to {\mathcal C}^\infty (\mathcal P)
			\\
			&(f,g) \longrightarrow{} \{ f, g\}
		\end{split}
	\end{equation}
	such that for all functions $f,g,h\in{\mathcal C}^\infty (\mathcal P)$,
	it satisfies:
	\begin{enumerate}
		\item the \emph{Jacobi identity}
		\begin{equation}
			\label{d-jaco}
			\{\{ f, g\}, h\} + \{\{ h, f\}, g\}  + \{\{ g, h\}, f\} =0,
		\end{equation}
		\item the \emph{Leibniz rule}
		\begin{equation}
			\{ hf, g\}=h\{ f, g\}+\{ h, g\}f.
			\label{eq:leibnitz}
		\end{equation}
	\end{enumerate}
	The operator $\left\{ .\,,. \right\}$ is called a \emph{Poisson bracket}.
	When there is no risk of confusion, we simply denote a Poisson
	manifold by $\mathcal P$, where the Poisson bracket is assumed to be fixed
	and given.
	
	In local coordinates $\ba=\left( a_{1},\dots,a_{n} \right)$
	the Poisson bracket is specified by an antisymmetric $(2,0)$  
	tensor $\pi^{ij}(\boldsymbol{a})$, the \emph{Poisson tensor},
	acting on the coordinates as
	\begin{equation}
		\label{Poisson}
		\{a_i,a_j\}=\pi^{ij}(\boldsymbol{a}),\qquad i,j=1,\dots n.
	\end{equation}
	The  Jacobi identity on the coordinates  is equivalent to the  relation
	\begin{equation}
		\frac{\partial \pi^{ij}(\boldsymbol{a})}{\partial a_s }\pi^{sk}(\boldsymbol{a}) 
		+ \frac{\partial \pi^{ki}(\boldsymbol{a})}{\partial a_s }\pi^{sj}(\boldsymbol{a}) 
		+\frac{\partial \pi^{jk}(\boldsymbol{a})}{\partial a_s }\pi^{si}(\boldsymbol{a}) =0, 
		\quad 
		1\leq i<j<k\leq n,
		\label{eq:jacobi}
	\end{equation}
	where we are summing over repeated indices.
	In an open subset of $\mathcal P$ the Poisson tensor has a fixed even rank $2r\leq n$.
	By antisymmetry, it follows that the Poisson tensor can be non-degenerate,
	meaning that $\det \pi(\boldsymbol{a})\neq 0$,
	only if the dimension $n$ of the base space is even, namely $n=2N$.  
	
	Given a function $H(\ba)\in {\mathcal C}^\infty (\mathcal  P)$, it generates
	a set of so-called \emph{Hamilton's equations} through the relation
	\begin{equation}
		\label{H2}
		\dot{a}_{j} =\{a_j,H\}=
		\sum_{j=1}^n\pi^{ij}(\ba)\dfrac{\partial H}{\partial a_j},\quad j=1,\dots, n.
	\end{equation}
	The function $H$ itself is called a \emph{Hamiltonian}.
	The previous set of equations defines a continuum time flow from an initial condition
	$\ba(0)\in \mathcal  P$ to its time evolution  $t>0$, namely 
	$\Phi_{t}\colon \ba(0)\to\ba(t)$.
	A function  $K=K(\boldsymbol{a})$ is constant under evolution 
	$\Phi_{t}$ if and only if 
	\begin{equation}
		\dot{K}=\{K,H\}=0.
		\label{eq:inv}
	\end{equation}
	In this case the quantity $K$ is called a \emph{first integral} or
	a \emph{constant of motion}. 
	The notion of Liouville integrability is strictly
	related to the number of first integrals and the rank of the associated
	Poisson tensor.
	
	\begin{definition}[Liouville integrability]
		A Hamiltonian system \eqref{H2} on a Poisson manifold $ \mathcal P$ of
		rank $2r\leq n$ is \emph{Liouville integrable} if there are $k=n-r$
		first integrals $H_{1}$, \dots, $H_{k}$ in \emph{involution}
		\begin{equation}
			\{H_i,H_j\}=0,\quad i,j=1,\dots, k\,,
			\label{eq:involHiHj}
		\end{equation}
		and functionally independent, namely
		\begin{equation}
			\rank \left( \frac{\partial H_{i}}{\partial a_{j}} \right)_{\substack{i=1,\dots,k \\ j=1,\dots,n}} = k\,,
			\label{eq:funcindip}
		\end{equation}
		\label{def:liouv}
		in a dense subset of $\mathcal P$.
	\end{definition}
	
	As discussed in the Introduction,  random initial data are obtained from an invariant measure $\di\mu$ of the form \eqref{volumeM}. 
	More precisely, this means that the measure of every subset $S\subset  \mathcal P$ with respect to $\di\mu$
	is preserved under the time-evolution $\Phi_t$,
	\begin{equation}
		\int_{\Phi_t(S)} \di\mu=\int_S \di\mu.
		\label{eq:volpreservation}
	\end{equation}
	Interpreting the evolution as a coordinate transformation, we have
	\begin{equation}
		\int_{\Phi_t(S)} \di\mu = \int_S\Phi_t^*(\di\mu),
		\label{eq:volpreservation2}
	\end{equation}
	where $\Phi_t^*(\di\mu)$ is the pull-back of $\di\mu$ through $\Phi_{t}$.
	This shows that the condition \eqref{eq:volpreservation} is satisfied
	if $\Phi_t^*(\di\mu)= \di\mu$.
	In coordinates, for a measure written in the form \eqref{volumeM}, namely $\di\mu=m\left( \ba \right)\d a_{1}\wedge\dots\wedge \d a_{n},$ the invariance of the measure with respect to the Hamiltonian flow
	is expressed by the condition 
	\begin{equation}
		\Div \left( m(\ba) \mathbf{f}_H(\boldsymbol{a})\right) 
		:= \sum_{i=1}^n\dfrac{\partial}{\partial a_i}(m(\ba) (\mathbf{f}_H(\ba))_i)=0,
		\label{eq:divcond}
	\end{equation}
	where $\Div$ is the usual euclidean divergence, see e.g. \cite[Chapter 1]{Guionnet2002}. 
	The vector field $\mathbf{f}_H$ is specified by the Hamiltonian $H$ via the relation
	$(\mathbf{f}_H)_i=\{a_i,H\}$. 
	The condition \eqref{eq:divcond}  can be written in the form
	\begin{equation}
		\{m,H\}+m\Div \left(\mathbf{f}_H\right)=0.
		\label{eq:divm}
	\end{equation}
	
	From formula \eqref{eq:divm} we immediately have two important
	consequences.
	\begin{itemize}
		\item If the Hamiltonian vector field is divergence free,  like in the
		case of a canonical Poisson bracket, it follows  that  the Euclidean measure
		\begin{equation}
			\di\mu_{0} = \d a_{1} \wedge \d a_{2} \wedge \dots \wedge \d a_{n}
			\label{eq:eclmeasure}
		\end{equation}
		is an invariant measure.
		\item If $K$ is a first integral and $m$ is the density of an invariant measure, then
		from the Leibniz rule \eqref{eq:leibnitz} it follows that  
		\begin{equation}
			\label{eq:GGE_formal}
			\wt m := f(K)m
		\end{equation}
		is the density of another invariant measure for every scalar 
		function $f\in\mathcal{C}^{\infty}(\mathcal{P})$.
	\end{itemize}
	In all the examples of this manuscript, the Hamiltonian vector fields
	are divergence free, so we will be allowed to consider the generalized
	Gibbs ensemble described in the Introduction where $\di\mu=\di\mu_0$  the Euclidean measure in  \eqref{eq:eclmeasure}.
	%
	%

	\section{Laguerre $\beta$-ensemble and the exponential Toda lattice}
	\label{toda}
	
	In this section, we introduce an integrable model that we call \textit{exponential Toda lattice}, since it resembles the well-know Toda lattice. We  construct the Lax pair for this system,  and we define its generalized Gibbs measure. Finally, we compute the  mean density of states  of the Lax matrix.
	
	The exponential Toda lattice is the Hamiltonian system on  $\mathcal P=\mathbb{R}^{2N}$ with canonical Poisson bracket described by the Hamiltonian
	\begin{equation}   \label{HamExpToda}
		H_E({\mathbf p},{\mathbf q}) = \sum_{j=1}^N e^{-p_j} + \sum_{j=1}^{N}e^{q_j-q_{j+1}}\,, \quad p_j,q_j \in \R\,.
	\end{equation}
	We consider periodic boundary conditions
	\begin{equation}
		q_{j+N}=q_{j} + \Delta, \quad p_{j+N} = p_j,\qquad \forall \, j\in \Z,
	\end{equation}
	and $\Delta\geq 0$ is an arbitrary constant. The equations of motion are given in Hamiltonian form as
	\begin{equation} \label{HamiltonianForm}
		\begin{split}
			& \dot{q}_j=\frac{\partial H_E}{\partial p_j} = - e^{-p_j}\,,\\  
			& \dot{p}_j=-\frac{\partial H_E}{\partial q_j} = e^{q_{j-1} -  q_{j}} -  e^{q_{j} -  q_{j+1}}\,.
		\end{split}
	\end{equation}
	The system possesses two trivial constants of motion,
	\begin{equation}    \label{ConstantsOfMotion}
		H_0({\mathbf p},{\mathbf q})= \sum_{j=1}^N (q_j-q_{j+1}),     \qquad    H_1({\mathbf p},{\mathbf q})= \sum_{j=1}^N p_j, 
	\end{equation}
	the first one due to periodicity, the second one due to the translational invariance of the Hamiltonian \eqref{HamExpToda}. 
	In order to obtain a Lax pair for this system we introduce, in the spirit of  Flaschka and Manakov  \cite{Flaschka1974a,Flashka1974b,Manakov1974}, the  variables 
	\begin{equation}    \label{ChangeOfVars}
		x_j = e^{-\frac{p_j}{2}},   \qquad
		y_j = e^{\frac{q_{j}-q_{j+1}}{2}}= e^{-\frac{r_j}{2}},\;\; r_j=q_{j+1}-q_{j},\quad j=1,\dots, N,
	\end{equation}
	where we notice that $\prod_{j=1}^Ny_j=e^{-\frac{\Delta}{2}}$.
	In these variables, the Hamiltonian \eqref{HamExpToda} and the constants of motion \eqref{ConstantsOfMotion}  transform into 
	
	\begin{equation}
		H_E({\mathbf x},{\mathbf y}) = \sum_{j=1}^N (x_j^2+y_j^2),\qquad 
		H_0({\mathbf x},{\mathbf y})= 2 \sum_{j=1}^N \log y_j,     \qquad    H_1({\mathbf x},{\mathbf y})=-2 \sum_{j=1}^N \log x_j.
	\end{equation}
	The Hamilton's equations \eqref{HamiltonianForm} become
	\begin{equation}        \label{ExpToda}
		\dot{x_j} = \frac{x_j}{2} \left( y_j^2-y_{j-1}^2 \right),   \qquad
		\dot{y_j} = \frac{y_j}{2} \left( x_{j+1}^2-x_j^2 \right),\quad j=1,\dots, N,
	\end{equation}
	where  $x_{N+1}=x_1, \, y_0 = y_N$.
	
	One can explicitly construct a Lax pair for this system. Let us introduce  the matrix $E_{r,s}$, defined as $\left(E_{r,s}\right)_{ij}=\delta^i_r \delta^j_s$. Set
	\begin{align}
		\label{eq:L_exp_toda}
		L &= \sum_{j=1}^N (x_j^2+y_{j-1}^2) E_{j,j} + \sum_{j=1}^{N} x_j y_j (E_{j,j+1}+E_{j+1,j}),             \\
		A &= \sum_{j=1}^{N} \frac{x_j y_j}{2} (E_{j,j+1}-E_{j+1,j}),
	\end{align}
	where, accounting for periodic boundary conditions, indices are taken modulo $N$, so that $E_{i,j+N} = E_{i+N,j}= E_{i,j}$ for all $i,j \in \Z$. For example, the matrix $L$ in \eqref{eq:L_exp_toda} has the explicit form
	\begin{equation}
		\label{Lax_exp_toda1}
		L= \begin{pmatrix} x_1^2+y_N^2 & x_1 y_1 & & &x_N y_N \\
			x_1 y_1 & x_2^2+y_1^2 & x_2 y_2 & &\\
			& \ddots & \ddots & \ddots &\\
			&& \ddots & \ddots &  x_{N-1}y_{N-1}  \\
			x_N y_N &&&  x_{N-1}y_{N-1} &x_N^2+y_{N-1}^2\end{pmatrix}\,. 
	\end{equation}
	The system of equations \eqref{ExpToda} then admits the Lax representation 
	\begin{equation}
		\dot{L} = [A,L].
		\label{eq:ccLA}
	\end{equation}
	Hence,  the quantities $H_m = \trace{L^{m-1}}$, $m=2,\ldots,N+1$ are constants of motion as well as the eigenvalues of $L$.  
	For the exponential Toda  lattice, we define the generalized Gibbs ensemble as
	\begin{equation}
		\label{eq:Guibbs_exp_toda}
		\di\mu_{ET} = \frac{1}{Z^{E}_N(\beta,\eta,\theta)} \exp\left( -\beta H_E +  \theta H_0 -\eta H_1  \right)\di \br\di\bp\,,
	\end{equation}
	where  $\beta,\eta,\theta >0$, the Hamiltonians $H_E$, $H_0$ and $H_1$ are defined in \eqref{HamExpToda} and \eqref{ConstantsOfMotion} respectively, $Z_N^{E}$ is the normalization constant,  $\di\br = \di r_1\ldots\di r_N$ and analogously for $\di \bp$. We notice that according to this measure, all the variables are independent, moreover all $p_j$ are identically distributed, and so are the $r_j$. After introducing the variables $(\br,\bp)\to(\bx,\by)$,   the previous measure turns into 
	\begin{equation}
		\label{Gibbs_measure_TodaExp}
		\di\mu_{ET} = \frac{1}{Z^{H_E}_N(\beta,\eta,\theta)} \prod_{j=1}^N x_j^{2\eta -1}e^{-\beta x_j^2}\di x_j \prod_{j=1}^N  y_j^{2\theta -1}e^{-\beta y_j^2 }  \di y_j\,.
	\end{equation}
	Let $\chi_{2\alpha}$ be the \emph{chi-distribution}, defined by its density
	\begin{equation}
		f_{{2\alpha}}(r)= \frac{r^{2\alpha-1}e^{-\frac{r^2}{2}}}{2^{\alpha-1}\Gamma(\alpha)}, \qquad r\in\R^+,
	\end{equation}
	here $\Gamma$ is the classical gamma function \cite[§5]{dlmf}.  Then,  the variables $x_j$ and $y_j$ in the Gibbs measure \eqref{Gibbs_measure_TodaExp} are independent random variables with scaled $chi$-distribution, respectively $f_{2\eta}(\sqrt{2\beta}x_j)\sqrt{2\beta}\di x_j$ and $f_{2\theta}(\sqrt{2\beta}y_j)\sqrt{2\beta} \di y_j$.
	
	The Lax matrix $L$ in \eqref{Lax_exp_toda1} becomes a random matrix when the entries are sampled according to \eqref{Gibbs_measure_TodaExp}. Such random matrix can be linked to 
	the so-called \emph{Laguerre $\alpha$-ensemble} \cite{mazzuca2021mean}. The connection is obtained noticing that the matrix $L$  admits the following decomposition
	\begin{equation}
		\label{Cholesky}
		L = B B^\intercal, \qquad B = \sum_{j=1}^N x_j E_{j,j} + \sum_{j=1}^{N} y_j E_{j+1,j},
	\end{equation}
	where $B^\intercal$ is the matrix transpose.
	On the other hand, the Laguerre $\alpha$-ensemble  is given by the set of matrices  
	\begin{equation}
		\label{Laguerre_alpha}
		L_{\alpha,\gamma} = B_{\alpha,\gamma} (B_{\alpha,\gamma})^\intercal,\quad     B_{\alpha,\gamma} =\frac{1}{\sqrt{2}} \begin{pmatrix}
			\begin{matrix}
				x_1 \\
				y_1 & x_2 \\
				& \ddots & \ddots  \\
				&& y_{N-1} & x_N  
			\end{matrix}& \bigzero{N}{(M-N)}
		\end{pmatrix},
	\end{equation}
	here $\bigzero{N}{(M-N)}$ is the zero matrix of dimension $N\times (M-N)$. The variables $x_n,y_n$ are distributed according to  chi-distribution
	\begin{align}
		\label{eq:LaguerreBmat}
		x_n &\sim \chi_{\frac{2\alpha}{\gamma}}\quad n = 1, \ldots, N,\\
		y_n &\sim \chi_{2\alpha}\quad n= 1, \ldots, N-1.
	\end{align}
	Thus, the following entry wise measure on the matrices $B_{\alpha,\gamma}$ can be defined,
	\begin{equation}    \label{GibbsLaguerre}
		\di\mu_{B_{\alpha,\gamma}} = \frac{1}{\left(2^{\frac{\alpha}{\gamma}-1} \Gamma(\frac{\alpha}{\gamma})\right)^{N}\left(2^{\alpha-1} \Gamma(\alpha)\right)^{N-1}} \prod_{j=1}^N x_j^{\frac{2\alpha}{\gamma} -1} e^{-\frac{x_j^2}{2}}\di x_j \prod_{j=1}^{N-1} y_j^{2\alpha -1} e^{-\frac{y_j^2}{2}}\di y_j\,.
	\end{equation}

	We observe that the matrix $B$ in \eqref{Cholesky}  has the same form of $B_{\alpha,\gamma}$ in \eqref{eq:LaguerreBmat}, with the addition of the corner element $y_N E_{1,N}$.  Furthermore, the  rescaling  of the variables  $(x_j,y_j)\mapsto\frac{1}{\sqrt{2\beta}}(x_j,y_j)$ in \eqref{Gibbs_measure_TodaExp},  amounts to the matrix rescaling $B\mapsto \frac{1}{\sqrt{2\beta}}B$, and comparing with \eqref{GibbsLaguerre} we see that $\frac{1}{\sqrt{2\beta}}B$ is a rank one perturbation of the matrix $B_{\theta,\frac{\theta}{\eta}}$.

	We are interested in studying the density of states $ \nu_{ET}$ for the Lax matrix $L$ when the entries are distributed according to the Gibbs measure $\di\mu_{ET}$   in \eqref{Gibbs_measure_TodaExp}. The density of states  $\nu_{ET}$  is  obtained from the weak convergence of the empirical measure of the Lax matrix $L$, namely
	\begin{equation} 
		\label{eq:dos}
		\frac{1}{N}\sum_{j=1}^N \delta_{\lambda_j}  \xrightharpoonup{N\to\infty} \nu_{ET}\,,
	\end{equation}
	where $\lambda_j$ are the eigenvalues of $L$ and $\delta_x$ is the Dirac delta function centred at $x$. 
	
	In order to study the density of states of the exponential Toda lattice, we need the following result proved in  \cite{mazzuca2021mean}.
	
	\begin{theorem}[cf. \cite{mazzuca2021mean}, Theorem 1.1]
		\label{thm:mazzuca}
		Consider the matrix $L_{\alpha,\gamma} = B_{\alpha,\gamma}B_{\alpha,\gamma}^\intercal$ distributed according to $\di\mu_{B_{\alpha,\gamma}}$  in \eqref{GibbsLaguerre}. Then, its mean density of states $\nu_{L_{\alpha,\gamma}}$ takes the form 
		\begin{equation}
			\nu_{L_{\alpha,\gamma}} =\partial_\alpha\left(\alpha  \mu_{\alpha,\gamma}(x)\right) \di x\, ,\quad x \geq 0 ,
		\end{equation}
		where
		\begin{equation}
			\label{mualpha}
			\mu_{\alpha,\gamma}(x) := \frac{1}{\Gamma(\alpha+1)\Gamma\left(1+ \frac{\alpha}{\gamma} + \alpha\right)} \frac{x^\frac{\alpha}{\gamma}e^{-x}}{\Big\lvert \psi\left(\alpha,-\frac{\alpha}{\gamma};xe^{-i\pi}\right)\Big\rvert^2}, \qquad x \geq 0\,,
		\end{equation}
		 and here $\psi(v,w;z)$ is  the Tricomi's confluent hypergeometric function (see Appendix \ref{appendix}).
	\end{theorem}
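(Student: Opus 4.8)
The plan is to compute $\nu_{L_{\alpha,\gamma}}$ through the averaged diagonal resolvent of the tridiagonal matrix $L_{\alpha,\gamma}=B_{\alpha,\gamma}B_{\alpha,\gamma}^{\intercal}$. Carrying out the product, $L_{\alpha,\gamma}$ is a Jacobi matrix with diagonal entries $a_n=\tfrac12(x_n^2+y_{n-1}^2)$ and off-diagonal entries $b_n=\tfrac12 x_n y_n$, where (with $y_0=0$) the squares satisfy $x_n^2\sim\Gamma(\alpha/\gamma)$ and $y_n^2\sim\Gamma(\alpha)$ and are independent along the lattice. Since the recurrence coefficients form an independent, hence stationary and ergodic, sequence, the empirical spectral measure converges almost surely to a deterministic density of states, and by the spectral theorem for Jacobi matrices it is enough to identify the Stieltjes transform $\int (x-z)^{-1}\,\nu_{L_{\alpha,\gamma}}(dx)=\mathbb{E}[G_{nn}(z)]$, with $G(z)=(L_{\alpha,\gamma}-z)^{-1}$, which is site independent in the limit.

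First I would write down the Schur-complement (continued-fraction) recursion for the boundary resolvent of a semi-infinite Jacobi matrix, $G_{nn}(z)=\big(a_n-z-b_n^2\,\widehat G_{n+1,n+1}(z)\big)^{-1}$, where $\widehat G$ is the resolvent of the matrix truncated below site $n$. By stationarity $\widehat G_{n+1,n+1}$ has, in the limit, the same law as $G_{nn}$ and is independent of $(a_n,b_n)$, so the stationary law of $m:=G_{nn}(z)$ on the upper half-plane is pinned down by the distributional fixed-point equation $m\stackrel{d}{=}\big(a-z-b^2 m'\big)^{-1}$, with $m'$ an independent copy of $m$ and $(a,b)$ carrying the Gamma-distributed data above. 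The density of states then follows from the first moment $\mathbb{E}[m]$ of this stationary law.

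The \emph{heart} of the argument is solving this fixed point in closed form, and here the Gamma structure of the coefficients is decisive: multiplication by a Gamma variable acts on Laplace/Mellin transforms as a first-order differential operator, so an appropriate integral transform of the stationary law turns the self-consistency relation into a linear second-order ordinary differential equation of confluent hypergeometric (Kummer) type. Imposing the correct decay and analyticity as $\operatorname{Im} z\to 0^{+}$ selects Tricomi's solution $\psi(\alpha,-\alpha/\gamma;\,\cdot\,)$, and reading off boundary values on $\R^{+}$ yields the weight $x^{\alpha/\gamma}e^{-x}\big/\big\lvert\psi(\alpha,-\alpha/\gamma;xe^{-i\pi})\big\rvert^{2}$ together with the Gamma prefactor, i.e. the function $\mu_{\alpha,\gamma}(x)$. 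I expect this ODE reduction and the selection of the right confluent hypergeometric solution, including the $xe^{-i\pi}$ branch prescription, to be the \textbf{principal obstacle}.

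It remains to explain why the density of states appears as $\partial_\alpha(\alpha\,\mu_{\alpha,\gamma})$ rather than as $\mu_{\alpha,\gamma}$ itself. I would establish this last identity either by manipulating the explicit expression for $\tfrac1\pi\operatorname{Im}\mathbb{E}[m]$ with contiguity and $\alpha$-derivative relations for $\psi$, so that it reorganizes into the stated derivative, or---more transparently---by comparing moments: one checks $\lim_{N\to\infty}\tfrac1N\,\mathbb{E}\big[\tr{L_{\alpha,\gamma}^{k}}\big]=\int x^{k}\,\partial_\alpha\big(\alpha\mu_{\alpha,\gamma}(x)\big)\,dx$ for all $k$, exploiting that $\alpha$ plays the role of $\beta N/2$ in the Dumitriu--Edelman tridiagonal model, so that differentiating in $\alpha$ accounts for the passage from the single-site spectral measure to the full normalized trace. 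A reassuring consistency check is that $\int\partial_\alpha(\alpha\mu_{\alpha,\gamma})\,dx=\partial_\alpha(\alpha)=1$, provided $\mu_{\alpha,\gamma}$ is itself a probability density, so that $\nu_{L_{\alpha,\gamma}}$ is automatically normalized.
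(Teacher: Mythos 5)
The central gap is your identification of the Stieltjes transform of the density of states with $\mathbb{E}[m]$, where $m$ solves the half-line fixed point $m\overset{d}{=}(a-z-b^{2}m')^{-1}$. That fixed point characterizes the boundary entry $\langle e_{1},(L-z)^{-1}e_{1}\rangle$ of a semi-infinite Jacobi matrix, so $\tfrac{1}{\pi}\Im\,\mathbb{E}[m]$ is the mean spectral measure \emph{at the boundary site}, not the density of states; a bulk diagonal entry obeys the two-sided Schur complement $G_{nn}=(a_{n}-z-b_{n-1}^{2}m^{-}-b_{n}^{2}m^{+})^{-1}$ with \emph{two} independent half-line copies $m^{\pm}$, and it is this quantity whose expectation gives the density of states. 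For i.i.d.\ coefficients the two measures genuinely differ, and here the difference is exactly the content of the theorem. Indeed, the moments $\mathbb{E}\langle e_{1},L^{k}e_{1}\rangle$ involve only coefficients at distance $O(k)$ from the corner; in the Dumitriu--Edelman tridiagonal model of the Laguerre $\beta$-ensemble at high temperature (with $\beta N$ of order $\alpha$), these corner coefficients converge in law, with all moments, to the i.i.d.\ ones of $L_{\alpha,\gamma}$, while in that model eigenvalues and spectral weights are independent and the weights are Dirichlet with mean $1/N$, so its mean boundary spectral measure equals its mean empirical measure, whose limit is $\mu_{\alpha,\gamma}$ by \cite{Allez2013}. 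Hence the object you propose to compute converges to $\mu_{\alpha,\gamma}$ itself: completing your program would yield the wrong answer, erasing precisely the operator $\partial_{\alpha}(\alpha\,\cdot)$ that distinguishes the i.i.d.\ ensemble from the $\beta$-ensemble at high temperature.

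Beyond this, the step you yourself flag as the principal obstacle---turning the distributional fixed point into a Kummer ODE by a Laplace/Mellin transform---is only asserted, and it is not a known manoeuvre for i.i.d.\ coefficients: the Riccati-to-confluent-hypergeometric linearization that produces $\psi\left(\alpha,-\alpha/\gamma;\cdot\right)$ comes from the loop equation of the \emph{invariant} Laguerre $\beta$-ensemble in the regime $\beta N$ fixed, and there is no joint eigenvalue density, hence no loop equation, for $L_{\alpha,\gamma}$. The paper does not attempt a direct solution either: Theorem \ref{thm:mazzuca} is imported from \cite{mazzuca2021mean}, and, as the remark following it records, the proof there runs in the opposite direction --- the Dumitriu--Edelman matrix at high temperature is locally an $s$-ensemble with $s$ sweeping linearly over $[0,\alpha]$ along the diagonal, which gives $\mu_{\alpha,\gamma}=\frac{1}{\alpha}\int_{0}^{\alpha}\nu_{L_{s,\gamma}}\,\mathrm{d}s$ and is then inverted to $\nu_{L_{\alpha,\gamma}}=\partial_{\alpha}(\alpha\mu_{\alpha,\gamma})$. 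Your ``more transparent'' fallback for the final step is exactly this comparison; once you invoke it, the fixed-point machinery is redundant, and without it your argument has no route to the stated formula.
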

	\begin{remark} The proof of theorem~\ref{thm:mazzuca} has been obtained by comparing the Laguerre  $\alpha$-ensemble \eqref{Laguerre_alpha} with the Laguerre
		$\beta$-ensemble at high temperature \cite{Allez2013}.
	\end{remark}
	The following  corollary follows.
	\begin{corollary} \label{thm:ExpToda Gibbs}
		Consider the Lax matrix $L=BB^\intercal$ in \eqref{Cholesky} of the exponential Toda lattice with Hamiltonian \eqref{HamExpToda} and   endow the entries of the matrix $B$ in \eqref{Cholesky} with the Gibbs measure $\mu_{ET}$ \eqref{Gibbs_measure_TodaExp}. Then, the density of states $\nu_{ET}$ of the Lax matrix $L=BB^\intercal$ takes the form
		\begin{equation}
			\label{nu_E}
			\nu_{ET} = \beta\partial_{\alpha}(\alpha \mu_{\alpha,\gamma}(\beta x))_{\vert_{\alpha = \theta, \gamma =\frac{\theta}{\eta} }} \di x, \quad x\geq 0\,,
		\end{equation}
		where the density $ \mu_{\alpha,\gamma}$ is defined in \eqref{mualpha}.
	\end{corollary}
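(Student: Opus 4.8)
The plan is to reduce the computation of $\nu_{ET}$ to a direct application of Theorem~\ref{thm:mazzuca}, by identifying the rescaled Toda Lax matrix with a Laguerre $\alpha$-ensemble matrix up to a perturbation whose rank stays bounded as $N\to\infty$. First I would read off the parameters. Under $\di\mu_{ET}$ in \eqref{Gibbs_measure_TodaExp} the diagonal entries of $B$ satisfy $x_j\sim\frac{1}{\sqrt{2\beta}}\chi_{2\eta}$ and the off-diagonal entries $y_j\sim\frac{1}{\sqrt{2\beta}}\chi_{2\theta}$, as already recorded before the statement. Comparing with \eqref{eq:LaguerreBmat}, where $x_n\sim\chi_{2\alpha/\gamma}$ and $y_n\sim\chi_{2\alpha}$, the two families of chi-indices coincide exactly when $2\alpha=2\theta$ and $2\alpha/\gamma=2\eta$, that is for $\alpha=\theta$ and $\gamma=\theta/\eta$. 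This is the source of the substitution $\alpha=\theta$, $\gamma=\theta/\eta$ appearing in \eqref{nu_E}.

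Next I would keep track of the two scalings simultaneously. Pulling the factor $\frac{1}{\sqrt{2\beta}}=\frac{1}{\sqrt{\beta}}\cdot\frac{1}{\sqrt 2}$ out of the entries of $B$ and remembering the $\frac{1}{\sqrt2}$ prefactor already built into $B_{\alpha,\gamma}$ in \eqref{Laguerre_alpha}, one obtains the distributional identity $\sqrt{\beta}\,B\overset{d}{=}B_{\theta,\theta/\eta}+c\,E_{1,N}$, where $c\sim\frac{1}{\sqrt2}\chi_{2\theta}$ is the single corner entry by which the cyclic matrix $B$ of \eqref{Cholesky} differs from the genuinely bidiagonal Laguerre matrix (the extra zero columns in \eqref{Laguerre_alpha} are irrelevant, since they do not affect the product). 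Squaring, $\beta L=\beta BB^\intercal\overset{d}{=}L_{\theta,\theta/\eta}+R_N$, where $R_N$ collects the cross terms produced by $c\,E_{1,N}$ and is therefore supported on the rows and columns indexed by $\{1,N\}$; in particular $\operatorname{rank}R_N\le 2$ for every $N$.

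The key step, and the only one requiring a genuine argument, is to discard $R_N$ at the level of the limiting spectral measure. Since $\beta L$ and $L_{\theta,\theta/\eta}$ are symmetric and differ by a matrix of rank at most $2$, the standard rank inequality for empirical spectral distributions gives $\|F_{\beta L}-F_{L_{\theta,\theta/\eta}}\|_\infty\le 2/N\to 0$ (and likewise after taking expectations), so the two empirical measures share the same weak limit. By Theorem~\ref{thm:mazzuca} this common limit is $\partial_\alpha\!\big(\alpha\mu_{\alpha,\gamma}(x)\big)\,\di x$ evaluated at $\alpha=\theta$, $\gamma=\theta/\eta$. Finally $L=\frac1\beta(\beta L)$, so the density of states of $L$ is the push-forward of the latter under $x\mapsto x/\beta$, which rescales the argument to $\beta x$ and introduces the Jacobian factor $\beta$, producing exactly $\nu_{ET}=\beta\,\partial_\alpha\!\big(\alpha\mu_{\alpha,\gamma}(\beta x)\big)\big|_{\alpha=\theta,\,\gamma=\theta/\eta}\,\di x$.

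The main obstacle I anticipate is precisely this rank bookkeeping of the corner term: one must verify that the boundary difference between the periodic Toda matrix and the open-ended Laguerre matrix remains a perturbation of rank bounded uniformly in $N$, so that the rank inequality really kills it in the limit. A secondary point to state carefully is that the weak convergence in \eqref{eq:dos} and the mean density of states of Theorem~\ref{thm:mazzuca} refer to the same limiting object, which holds here because the empirical measures of these $\beta$-ensemble-type models self-average; both issues are mild but should be made explicit.
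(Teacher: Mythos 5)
Your proposal is correct and follows essentially the same route as the paper's proof: the paper likewise treats the corner entry $y_N$ as a bounded-rank perturbation to be discarded (citing Bai's Theorem A.43, which is precisely the rank inequality you invoke), rescales $B$ so that its entries match those of $B_{\theta,\theta/\eta}$, and concludes by applying Theorem \ref{thm:mazzuca} together with the pushforward under $x\mapsto x/\beta$. If anything, your bookkeeping is more explicit than the paper's two-line argument, both in tracking the $\sqrt{2}$ prefactor (so that it is $\sqrt{\beta}\,B$, rather than $\tfrac{1}{\sqrt{2\beta}}B$ as loosely stated in the text, that matches $B_{\theta,\theta/\eta}$ in distribution) and in flagging that the mean density of states of Theorem \ref{thm:mazzuca} and the weak limit in \eqref{eq:dos} must be identified via self-averaging.
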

	\begin{proof}
		
		First, we notice that by virtue of general theory, see \cite[Theorem A.43]{BaiBook}, we can restrict  to the case $y_N = 0$ in \eqref{Gibbs_measure_TodaExp}. As observed above, performing the change of variables $(x_j,y_j)\mapsto\frac{1}{\sqrt{2\beta}}(x_j,y_j)$, which amounts to rescale $B\mapsto \frac{1}{\sqrt{2\beta}}B$,  one has that the matrix entries of $\frac{1}{\sqrt{2\beta}}B$  are distributed as the matrix entries of $B_{\theta,\frac{\theta}{\eta}}$. 
		Applying  Theorem \ref{thm:mazzuca} we obtain the claim.
	\end{proof}

	\subsection{Parameter Limit}
	\label{sec:ExpTodaLimit}
	
	In this section, we examine the low-temperature limit of the Hamiltonian system \eqref{HamExpToda}. Namely, we want to compute the eigenvalues of the  Lax matrix $L$ in \eqref{eq:L_exp_toda} in the limit $\beta,\theta,\eta \to \infty$,  in such a way that 
	$$\eta = \wt \eta \beta, \quad \theta = \wt \theta \beta ,$$  where $\wt \eta$ and $\wt\theta$ are in compact sets of $ \R_+$. 
	
	Since  all $x_j$ and $y_j$ are independent random variables, we just have to consider the weak limit of the rescaled chi-distributions, respectively
	\[
	f_{2\tilde{\eta}\beta}(\sqrt{2\beta}x)\sqrt{2\beta}\di x,\quad f_{2\tilde{\theta}\beta}(\sqrt{2\beta}y)\sqrt{2\beta} \di y \,.
	\]
	We explicitly work out one of the cases above.

	We consider a continuous and bounded  function  $h\;:\; \R_+ \to \R$  and  evaluate  the limit 
	\begin{equation}
		\lim_{\beta\to\infty}\int_0^\infty h(x) f_{2\eta\beta}(\sqrt{2\beta}x)\sqrt{2\beta}\di x = \lim_{\beta\to\infty} \frac{\int_0^\infty h(x) e^{\beta(2\wt\eta\log x- x^2)} \di x}{\int_0^\infty x^{2\wt\eta\beta} e^{-\beta x^2} \di x}=h\left(\sqrt{\wt\eta}\right)\,.
	\end{equation}
	The last identity has been obtained by applying the Laplace method (see \cite{MillerBook}) and observing that the  minimizer  of the  term $2\wt \eta \log(x) - x^2$  in the  exponent
	of the integral  is $x_0 = \sqrt{\wt \eta}$.

	As a consequence, we conclude that $    x_j \rightharpoonup \sqrt{\wt\eta}$ and $ y_j \rightharpoonup \sqrt{\wt \theta}$, $j=1,\ldots,N$  as $\beta\to\infty$, 
	where with $\rightharpoonup$ we denote the weak convergence. The previous limit implies that the measure $\nu_{ET}$ in \eqref{nu_E}  weakly converges, in the low temperature limit, to the density of states of the matrix $L_\infty$
	\begin{equation}
		L_\infty = \begin{pmatrix} \wt \eta + \wt \theta & \sqrt{\wt\eta\wt\theta} & & &\sqrt{\wt\eta\wt\theta} \\
			\sqrt{\wt\eta\wt\theta} & \wt \eta + \wt \theta & \sqrt{\wt\eta\wt\theta} & &\\
			& \ddots & \ddots & \ddots &\\
			&& \ddots & \ddots & \sqrt{\wt\eta\wt\theta}  \\
			\sqrt{\wt\eta\wt\theta} &&&\sqrt{\wt\eta\wt\theta}&\wt \eta + \wt \theta\end{pmatrix}\,. 
	\end{equation}

Indeed, the fact that $L$ is tridiagonal with iid entries along the diagonals, implies its $k$-th moment depends on a multiple of $k$ number of variables only; specifically looking back at the Lax matrix $L$ in \eqref{Lax_exp_toda1}
\begin{equation}
	\la \tr{L^k }\ra  =\la \sum_{j=1}^N \left( L^k \right)_{jj}  \ra = N \cdot \left \langle \left( L^k \right)_{11} \right \rangle =: N \cdot \left \langle f \left(x_{N-k},\dots, x_{N+k}; y_{N-k},\dots, y_{N+k} \right) \right \rangle,
\end{equation}
for some function $f(\cdot)$ of its entries. Then, passing to the density of states \eqref{eq:dos} and renaming the iid variables, the scaling factor $N$ identically cancels out and moments converge,
\begin{equation}
	\lim_{N \to \infty} \frac{1}{N} \left \langle \tr{L^k}\right \rangle = \left \langle f \left(x_{1},\dots, x_{2k}; y_{1},\dots, y_{2k} \right) \right \rangle.
\end{equation}
The eigenvalues being functions of moments, density of states converges as well. In particular, this also shows that the two limit commute in taking the density of states at low temperature,
\begin{equation} 
	\lim_{N \to \infty} \lim_{\beta \to \infty} \frac{1}{N}\sum_{j=1}^N \delta_{\lambda_j} =   \lim_{\beta\to \infty} \lim_{N \to \infty} \frac{1}{N}\sum_{j=1}^N \delta_{\lambda_j} = \nu_{\infty}\,,
\end{equation}
since the limits can be passed directly to the variables $x_i,y_i$.

	The matrix $L_\infty$ is a circulant matrix, so its eigenvalues can be computed explicitly \cite{gray2006toeplitz}  as 
	\begin{equation}
		\lambda_j = \wt \eta + \wt \theta + 2\sqrt{\wt \eta\wt\theta }\cos\left(2\pi \frac{j}{N}\right)\,, \quad j=1,\ldots,N\,.
	\end{equation}
	From this explicit expression, it follows that the density of states of $L_\infty$ is 
	\begin{equation}
		\label{nu_L}
		\nu_{L_\infty} = \frac{1}{2\pi}   \frac{dx}{\sqrt{(c_+-x)(x-c_-)}} \mathbbm{1}_{(c_-,c_+)},\quad c_{\pm}=(\sqrt{\wt\eta}\pm\sqrt{\wt\theta})^2. 
	\end{equation}
	%
	%
	here $\mathbbm{1}_{(c_-,c_+)}$ is the indicator function of the set $(c_-,c_+)$. In particular, this measure belongs to the class of Arcsin distributions.
	Thus, we proved the following result.
	
	\begin{proposition}
		Consider the random Lax matrix $L$ in  \eqref{eq:L_exp_toda}  sampled from  the Gibbs ensemble $\di\mu_{ET}$ \eqref{Gibbs_measure_TodaExp} of the exponential Toda lattice \eqref{ExpToda}. The density of states of the matrix $L$ in the low-temperature limit, i.e. when $\beta,\theta,\eta\to\infty$ in such a way that $\eta = \wt \eta\beta,\, \theta = \wt \theta \beta$, with  $\wt \eta,\wt \theta$ in compact subsets of   $ \R_+$, is the Arcsin distribution  given by \eqref{nu_L}.

	\end{proposition}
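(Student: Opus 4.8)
The plan is to show that in the prescribed low-temperature scaling the random entries of the Lax matrix $L$ concentrate onto deterministic values, so that $L$ converges to the explicit circulant matrix $L_\infty$ whose spectrum is diagonalized by the Fourier basis; the arcsin law then emerges as the equidistribution of a cosine. First I would exploit the structure of $\di\mu_{ET}$ in \eqref{Gibbs_measure_TodaExp}, under which the $\{x_j\}$ and $\{y_j\}$ are mutually independent with scaled chi-distributions $f_{2\wt\eta\beta}(\sqrt{2\beta}\,\cdot)\sqrt{2\beta}$ and $f_{2\wt\theta\beta}(\sqrt{2\beta}\,\cdot)\sqrt{2\beta}$. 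Testing against a bounded continuous $h$ and applying the Laplace method to
\begin{equation}
\frac{\int_0^\infty h(x)\,e^{\beta(2\wt\eta\log x-x^2)}\di x}{\int_0^\infty e^{\beta(2\wt\eta\log x-x^2)}\di x},
\end{equation}
whose exponent is maximized at the interior point $x_0=\sqrt{\wt\eta}$, yields $x_j\rightharpoonup\sqrt{\wt\eta}$, and symmetrically $y_j\rightharpoonup\sqrt{\wt\theta}$, for every $j$.

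Second, with all entries concentrated, for each fixed $N$ the random matrix $L$ converges in probability (entrywise, hence in operator norm) to the deterministic matrix $L_\infty$ with diagonal $\wt\eta+\wt\theta$ and sub-, super-, and corner-diagonal entries $\sqrt{\wt\eta\wt\theta}$. Since $L_\infty$ is circulant I would diagonalize it in the discrete Fourier basis to obtain $\lambda_j=\wt\eta+\wt\theta+2\sqrt{\wt\eta\wt\theta}\cos(2\pi j/N)$. As $N\to\infty$, the empirical measure $\frac1N\sum_j\delta_{\lambda_j}$ is the push-forward of the uniform law on $[0,2\pi)$ under $\varphi\mapsto\wt\eta+\wt\theta+2\sqrt{\wt\eta\wt\theta}\cos\varphi$; a direct change of variables produces exactly the arcsin density \eqref{nu_L} supported on $(c_-,c_+)$ with $c_\pm=(\sqrt{\wt\eta}\pm\sqrt{\wt\theta})^2$.

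The delicate point, and the step I expect to be the main obstacle, is the legitimacy of interchanging the two limits $N\to\infty$ and $\beta\to\infty$, that is, showing that the iterated limit of the empirical spectral distribution is well defined and order-independent. I would resolve this via the method of moments, following the remark preceding the statement: because $L$ is banded with entries that are i.i.d.\ along each diagonal, $\la\tr{L^k}\ra=N\,\la(L^k)_{11}\ra$ depends on only $O(k)$ of the variables, so $\frac1N\la\tr{L^k}\ra$ admits a limit that is continuous in $\beta$ uniformly in $N$ for each fixed $k$. Concentration of the entries then forces every moment to converge to the corresponding moment of $\nu_{L_\infty}$; since $\nu_{L_\infty}$ is compactly supported it is determined by its moments, so moment convergence upgrades to weak convergence of the spectral measure and both iterated limits coincide with $\nu_\infty=\nu_{L_\infty}$. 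Identifying $\nu_\infty$ with the arcsin density \eqref{nu_L} completes the argument.
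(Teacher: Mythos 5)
Your proposal is correct and follows essentially the same route as the paper: Laplace's method to show the entries concentrate at $\sqrt{\wt\eta}$ and $\sqrt{\wt\theta}$, identification of the limit with the circulant matrix $L_\infty$ whose eigenvalues $\wt\eta+\wt\theta+2\sqrt{\wt\eta\wt\theta}\cos(2\pi j/N)$ yield the arcsin law, and the same moment argument (banded structure, i.i.d.\ diagonals, $\la\tr{L^k}\ra = N\la (L^k)_{11}\ra$) to justify interchanging the limits $N\to\infty$ and $\beta\to\infty$. No gaps; this matches the paper's proof.
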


	\section{Volterra lattice}
	\label{volterra}
	
	The \emph{Volterra lattice}, also known as the \emph{discrete KdV equation}, describes the  motion of $N$ particles on the line with equations 
	\begin{equation}    \label{Volterra}
		\dot{a_j} = a_j \left(a_{j+1} - a_{j-1} \right), \qquad j=1,\dots,N.
	\end{equation}
	It was originally  introduced  by Volterra to study 
	population evolution in a hierarchical system of competing species. It was first
	solved by Kac and van Moerbeke in \cite{Kac1975} using a discrete version of inverse scattering due to Flaschka \cite{Flashka1974b}. Equations \eqref{Volterra} can be considered as a finite-dimensional approximation of the Korteweg–de Vries equation. 
	
	The phase space is $\mathbb{R}_+^N$ and we consider  periodic boundary conditions $a_j=a_{j+N}$ for all $j\in\Z$. The Volterra lattice  is a  reduction of the \emph{second flow of the Toda lattice} \cite{Kac1975}. Indeed, the latter is described by the dynamical system
	\begin{align}
		\dot{a_j} &= a_j \left(b_{j+1}^2 - b_{j}^2 + a_{j+1}-a_{j-1} \right),     &   &\qquad j=1,\dots,N,    \\
		\dot{b_j} &= a_j(b_{j+1}+b_j) -a_{j-1}(b_j+b_{j-1}),       &   &\qquad j=1,\dots,N,     \label{2ndToda}
	\end{align}
	and equations \eqref{Volterra} are recovered just by setting $b_j\equiv0$. The Hamiltonian structure of the equations follows from the one of the Toda lattice.  On the phase space $\R^N_+$  we introduce the   Poisson bracket
	\begin{equation}
		\label{eq:poisson_volterra}
		\{ a_j, a_i \}_{\Volt} =  a_ja_i(\delta_{i,j+1} - \delta_{i,j-1})\,
	\end{equation}
	and the Hamiltonian  $H_1 = \sum_{j=1}^N a_j\,$ so that the equations of motion \eqref{Volterra}  can be written in the Hamiltonian form 
	\begin{equation}
		\dot{a}_j = \{ a_j, H_1\}_{\Volt}\,.
		\label{eq:hamvoltN}
	\end{equation}
	An elementary constant of motion for the system is $H_0 = \prod_{j=1}^N a_j$ that is  independent of  $H_1$.
	
	The Volterra lattice  is a completely integrable system, and  it admits several equivalent \emph{Lax representations}, see e.g. \cite{Kac1975,Moser75}. The classical one reads
	\begin{equation}  
		\dot{L}_1 = \left[A_1,L_1\right],
	\end{equation}
	where 
	\begin{equation}
		\begin{split}
			\label{eq:classic_Vlax}
			L_1 &= \sum_{j=1}^{N} a_{j+1}E_{j+1,j}+E_{j,j+1},             \\
			A_1 &= \sum_{j=1}^{N} (a_j + a_{j+1}) E_{j,j} + E_{j,j+2}\,,     
		\end{split}
	\end{equation}
	where we recall that the matrix $E_{r,s}$ is defined as $\left(E_{r,s}\right)_{ij}=\delta^i_r \delta^j_s$ and   $E_{j+N,i} =E_{j,i+N} = E_{j,i}$.
	There exists also a \emph{symmetric} formulation due to Moser \cite{Moser75},
	\begin{equation}
		\label{Vlax}
		\begin{split}
			\dot{L}_2 & = \left[A_2,L_2\right]\\
			L_2 &= \sum_{j=1}^{N} \sqrt{a}_j (E_{j,j+1}+E_{j+1,j})\,,             \\
			A_2 &= \frac{1}{2}\sum_{j=1}^{N} \sqrt{a_ja_{j+1}} (E_{j,j+2}-E_{j+2,j})\,,     
		\end{split}
	\end{equation}
	which assumes that all $a_j > 0$.

	Furthermore, we point out that there exists also an \textit{antisymmetric} formulation for this Lax pair, indeed a straightforward computation yields 
	\begin{proposition} \label{AntiProp}
		Let $a_j > 0$ for all $j=1,\ldots,N$. Then, the dynamical system \eqref{Volterra} admits an antisymmetric Lax matrix $L_3$ with companion matrix $A_3$, namely the equations of motion are equivalent to $\dot{L_3} = \left[A_3,L_3\right]$ with
		\begin{align}
			\label{LaxVolterra}
			L_3 &= \sum_{j=1}^{N}\sqrt{ a_j} (E_{j,j+1}-E_{j+1,j}),             \\
			A_3 &= \frac{1}{2}\sum_{j=1}^{N} \sqrt{a_ja_{j+1}} (E_{j+2,j}-E_{j,j+2}).      \label{AntiVlax}
		\end{align}
	\end{proposition}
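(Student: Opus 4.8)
The plan is to verify the matrix identity $\dot{L_3} = [A_3,L_3]$ entrywise and then read off its equivalence with the Volterra equations \eqref{Volterra}. Since both $L_3$ and $A_3$ in \eqref{LaxVolterra}--\eqref{AntiVlax} are real antisymmetric, the commutator $[A_3,L_3]$ is again antisymmetric; hence it suffices to match the entries on the first superdiagonal $(j,j+1)$, with the subdiagonal entries following automatically by antisymmetry, while all remaining diagonals must be checked to vanish. Writing $b_j := \sqrt{a_j}$, the first step is the immediate chain-rule computation
\[
\dot{L_3} = \sum_{j=1}^N \frac{\dot a_j}{2\sqrt{a_j}}\,(E_{j,j+1} - E_{j+1,j}),
\]
so that substituting $\dot a_j = a_j(a_{j+1}-a_{j-1})$ makes the $(j,j+1)$ entry of $\dot{L_3}$ equal to $\tfrac{1}{2}\sqrt{a_j}(a_{j+1}-a_{j-1})$.

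The core of the argument is the computation of $[A_3,L_3] = A_3 L_3 - L_3 A_3$. The key structural observation is that $A_3$ is supported on the diagonals $(j,j\pm 2)$ whereas $L_3$ is supported on $(j,j\pm 1)$, so both products $A_3 L_3$ and $L_3 A_3$ are supported only on the diagonals $(j,j\pm 1)$ and $(j,j\pm 3)$. I would compute these four bands explicitly, keeping every index modulo $N$ to respect the periodic convention $E_{j+N,i}=E_{j,i+N}=E_{j,i}$.

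I expect two things to occur. First, the crucial consistency check: the $(j,j+3)$ contributions of $A_3 L_3$ and of $L_3 A_3$ coincide (both equal $-\tfrac{1}{2}b_j b_{j+1} b_{j+2}$), and likewise on the $(j,j-3)$ band, so all $\pm 3$ terms cancel in the commutator. This is precisely what guarantees that $[A_3,L_3]$ inherits the same tridiagonal antisymmetric band structure as $\dot{L_3}$, and it is the part most prone to index errors. Second, on the surviving $(j,j+1)$ band, collecting the two contributions gives $\tfrac{1}{2}b_j b_{j+1}^2 - \tfrac{1}{2}b_{j-1}^2 b_j = \tfrac{1}{2}\sqrt{a_j}(a_{j+1}-a_{j-1})$, using $b_k^2 = a_k$.

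Comparing band by band, the matrix equation $\dot{L_3}=[A_3,L_3]$ then holds if and only if $\tfrac{1}{2}\sqrt{a_j}(a_{j+1}-a_{j-1})$ agrees on both sides for every $j$, i.e. if and only if $\dot a_j = a_j(a_{j+1}-a_{j-1})$ for all $j=1,\dots,N$, which are exactly the Volterra equations \eqref{Volterra}. This establishes the claimed equivalence and proves the proposition. The main obstacle is purely the bookkeeping of the shifted-diagonal products together with the periodic wrap-around, and in particular the verification that the $\pm 3$ diagonals cancel; there is no analytic difficulty, consistent with the paper's remark that this is a \emph{straightforward computation}.
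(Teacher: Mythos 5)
Your proposal is correct: the band-by-band computation of $[A_3,L_3]$ (cancellation of the $\pm 3$ diagonals, and the $\pm 1$ diagonals reproducing $\tfrac{1}{2}\sqrt{a_j}(a_{j+1}-a_{j-1})$) checks out, and the reduction to the superdiagonal via antisymmetry of the commutator is valid. This is exactly the ``straightforward computation'' the paper invokes without writing out, so your argument coincides with the paper's (implicit) proof, merely with the details made explicit.
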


	\subsection{Gibbs Ensemble}
	
	We  introduce a Gibbs ensemble for the Volterra lattice \eqref{Volterra} by observing 
	that   its  vector field $f_j=a_j \left(a_{j+1} - a_{j-1} \right)$ is divergence free, due to the periodic boundary conditions. Therefore, an invariant measure can be obtained from 
	\eqref{eq:GGE_formal}. We use $ H_0 = \prod_{j=1}^N a_j,$ and $H_1 = \sum_{j=1}^N a_j$ as constants of motion to construct the invariant measure
	\begin{equation}
		\label{VolterraGibbs}
		\di\mu_{\Volt}(\ba) = \frac{1}{Z_N^{\Volt}(\beta,\eta)}e^{-\beta H_1 + (\eta-1)\log H_0} \d  \ba,\quad \beta,\eta>0,
	\end{equation}
	where 
	\begin{equation}
		Z_N^{\Volt}(\beta,\eta) = \left( \frac{\Gamma\left( \eta \right)}{\beta^{\eta}} \right)^{N}<\infty,
		\label{eq:ZNvolterravalue}
	\end{equation}
	and $\Gamma\left( \eta \right)$ is the Euler gamma function \cite[§5]{dlmf}. We notice that according to this measure, all the variables are independent and identically distributed (i.i.d.).
	
	Next we want to characterize the density of states of the antisymmetric Lax  $L_3$  of the Volterra lattice  given in Proposition \ref{AntiProp}.
	Among the three Lax matrices of the Volterra lattice, the matrix $L_3$ is particularly useful since it allows us  to connect the Volterra lattice with a specific $\alpha$-ensemble, namely the antisymmetric Gaussian $\alpha$-ensemble. 
	The antisymmetric Gaussian $\alpha$-ensemble, see \cite{Mazzuca2021}, is the family of random
	antisymmetric tridiagonal matrices 
	\begin{equation}
		\label{eq:alphaense}
		L_\alpha = 	\begin{pmatrix}
			0 & y_1 \\
			-y_1 & 0 & y_2 \\
			&\ddots &\ddots &\ddots\\
			&& -y_{N-2} & 0 & y_{N-1}\\
			&&& -y_{N-1} & 0 
		\end{pmatrix},
	\end{equation}
	where $y_i$ are i.i.d. random variables  with density
	\begin{equation}
		f_{2\alpha}(y)=\frac{y^{2\alpha-1}e^{-y^2}}{\Gamma(\alpha)}\,,\quad y\in\R^+\,,
	\end{equation}
	which is just a rescaled chi-distribution. Even though we use a different expression of the  chi-distribution  with respect to section~\ref{toda}, we keep the same notation $ f_{2\alpha}(y)$ for the density that will be used only in this section.
	This distribution induces a measure on the entries of the matrix $L_{\alpha}$, namely
	\begin{equation}    \label{Alfa Gauss Measure}
		\di\mu_{L_{\alpha}} = \frac{\prod_{i=1}^{N-1} y_i^{2\alpha-1}e^{-y_i^2}\,\mathbbm{1}_{\R_+}(y_i)\di\by}{\Gamma(\alpha)^{N-1}}.
	\end{equation}
	In \cite{Mazzuca2021} the authors studied this matrix ensemble in connection with the antisymmetric Gaussian $\beta$-ensemble introduced by Dumitriu and Forrester \cite{Dumitriu_Forrester}
	in the  high temperature regime, and computed explicitly its density of states $ \nu_{L_{\alpha}}(x)$, defined as
	\begin{equation}
		\dfrac{1}{N}\sum_{j=1}^{N} \delta_{\Im(\lambda_j)} \xrightharpoonup{N\to\infty}  \nu_{L_{\alpha}}(x)\,, 
	\end{equation}
	where $\lambda_j$ are the eigenvalues of $L_\alpha$. Since the matrix $L_\alpha$ is antisymmetric with real entries, its eigenvalues are purely imaginary numbers. 
	
	\begin{theorem}[cf. \cite{Mazzuca2021}]
		
		\label{thm:guido_peter}
		The  density of states  of  the random matrix $L_\alpha$ in \eqref{eq:alphaense}, is explicitly given by
		\begin{equation}
			\nu_{L_\alpha}(x) = \partial_\alpha(\alpha \theta_\alpha(x)) \di x\,,
		\end{equation}
		where
		\begin{equation}
			\label{eq:theta}
			\theta_\alpha(x) = \left \vert \Gamma(\alpha) W_{-\alpha + 1/2,0}(-x)\right\vert^{-2}\,,
		\end{equation}
		here $\Gamma(x)$ is the gamma function and $W_{k,\mu}(z)$ is the Whittaker function (see Appendix \ref{appendix}) .
	\end{theorem}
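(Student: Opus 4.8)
The plan is to follow the strategy already used for the Laguerre case in Theorem~\ref{thm:mazzuca}: realize $L_\alpha$ as the tridiagonal model of the antisymmetric Gaussian $\beta$-ensemble of Dumitriu--Forrester in the high-temperature regime, and compute the limiting density of states through the resolvent of the tridiagonal matrix. Concretely, I would set $\beta = 2\alpha/N$ and check that in this scaling the position-dependent chi-parameters of the Dumitriu--Forrester bidiagonal model collapse to the single iid law $f_{2\alpha}$, so that $L_\alpha$ in \eqref{eq:alphaense} is exactly the $N\to\infty$ object whose empirical spectral measure must be controlled. Since $L_\alpha$ is real antisymmetric, its eigenvalues are purely imaginary and the spectrum is symmetric, so it suffices to characterise the law of $\Im(\lambda_j)$.

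First I would write the diagonal entries of the resolvent $(z - L_\alpha)^{-1}$ as a continued fraction generated by the transfer recursion of the tridiagonal matrix; because $L_\alpha$ is antisymmetric, the recursion couples successive truncated resolvents through the squares $y_j^2$ with a definite sign. As the $y_j$ are iid, the sequence of truncated ``tail'' resolvents forms a Markov chain, and the key step is to prove that it admits a unique stationary law, a fixed point of a distributional map of the form $R \overset{d}{=} \bigl(-z + y^2\,R'\bigr)^{-1}$, with $y\sim f_{2\alpha}$ and $R'$ an independent copy of $R$. The Stieltjes transform of the density of states, and hence $\nu_{L_\alpha}$, is then recovered from the expectation of $R$ against the stationary law.

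The heart of the argument, and the step I expect to be the main obstacle, is to turn this distributional fixed-point equation into an explicit special function. Averaging the fixed point against the chi-density $f_{2\alpha}$ and passing to the Laplace/Stieltjes transform of the stationary measure, I would show that the resulting transform solves a second order linear ODE which, after fixing the parameters to $(\kappa,\mu)=(-\alpha+\tfrac12,0)$, is the Whittaker equation; selecting the unique solution with the correct decay and analyticity in the cut plane identifies it with $W_{-\alpha+1/2,0}$. This yields $\theta_\alpha(x)=\lvert \Gamma(\alpha) W_{-\alpha+1/2,0}(-x)\rvert^{-2}$ as the density of the auxiliary measure, in complete analogy with the Tricomi function $\psi$ appearing in \eqref{mualpha}; the two are in fact linked by the identity $W_{-\alpha+1/2,0}(z)=e^{-z/2}z^{1/2}\,\psi(\alpha,1;z)$, which is the structural reason the antisymmetric Gaussian case mirrors the Laguerre case.

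Finally, to obtain the stated form $\nu_{L_\alpha}=\partial_\alpha(\alpha\theta_\alpha)\,dx$ I would exploit the thermodynamic identity, inherited from the high-temperature $\beta$-ensemble framework exactly as in Theorem~\ref{thm:mazzuca}, relating the mean density of states to the $\alpha$-derivative of $\alpha$ times the stationary density $\theta_\alpha$; this comes from differentiating the normalisation (free energy) of the ensemble in the temperature-like parameter $\alpha$. As a consistency check one may instead square $L_\alpha$: the odd/even reordering block-diagonalises $L_\alpha^2$ into $-CC^\intercal\oplus(-C^\intercal C)$ with $C$ bidiagonal, so the nonzero $\Im(\lambda_j)^2$ are the eigenvalues of a Laguerre-type matrix $CC^\intercal$, and Theorem~\ref{thm:mazzuca} applied with the shape parameter $\gamma$ dictated by the aspect ratio of $C$, together with the push-forward under $x\mapsto x^2$, should reproduce the same $\theta_\alpha$; verifying this matching is a useful cross-check, but it requires the Whittaker--Tricomi identity above and careful bookkeeping of $\gamma$, which is why I would treat the direct resolvent computation as the primary route.
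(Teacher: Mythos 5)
First, be aware that the paper does not prove Theorem \ref{thm:guido_peter} at all: it is imported from \cite{Mazzuca2021}, and the remark following it records the method used there, namely a comparison between the iid ``$\alpha$''-model \eqref{eq:alphaense} and the antisymmetric Gaussian $\beta$-ensemble at high temperature. Your proposal is therefore an attempt at a full proof, and it contains a genuine structural error at its very first step. Setting $\beta = 2\alpha/N$ in the Dumitriu--Forrester tridiagonal model does \emph{not} make the chi-parameters ``collapse to the single iid law $f_{2\alpha}$'': in that model the entry at site $j$ is chi-distributed with parameter proportional to $\beta(N-j)$, so under the scaling $\beta = 2\alpha/N$ the parameters decrease linearly from about $2\alpha$ down to $0$ along the diagonal. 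The $\alpha$-ensemble $L_\alpha$ (constant parameter) and the high-temperature $\beta$-ensemble (linearly decaying profile) are different random matrices with different densities of states, and this difference is precisely the origin of the operation $\partial_\alpha(\alpha\,\cdot)$ in the statement. By locality of moments of tridiagonal matrices (the same observation the paper exploits in Section \ref{sec:ExpTodaLimit}), the density of the profile model is the parameter average $\theta_\alpha = \tfrac{1}{\alpha}\int_0^\alpha \nu_{L_a}\,\di a$ of the iid densities, and inverting this relation gives $\nu_{L_\alpha} = \partial_\alpha(\alpha\theta_\alpha)$. Had your ``collapse'' claim been correct, the theorem would read $\nu_{L_\alpha} = \theta_\alpha\,\di x$ with no derivative at all, so your starting point is inconsistent with the very formula you are trying to prove.

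This error propagates through the rest of the plan. Your Dyson--Schmidt resolvent fixed point $R \overset{d}{=} \bigl(-z + y^2 R'\bigr)^{-1}$ is set up on the iid matrix, so whatever it produces is (a transform of) $\nu_{L_\alpha}$ itself, not an ``auxiliary'' density; there is no reason the Whittaker expression $\lvert \Gamma(\alpha) W_{-\alpha+1/2,0}(-x)\rvert^{-2}$ should emerge from that equation, since $\theta_\alpha$ is the density of states of the \emph{other} ensemble, obtained in \cite{Dumitriu_Forrester,Mazzuca2021} from the varying-parameter model and its explicit joint eigenvalue density. Your final appeal to a ``thermodynamic identity'' obtained by ``differentiating the normalisation (free energy)'' to manufacture the factor $\partial_\alpha(\alpha\,\cdot)$ is not the operative mechanism and is left unsubstantiated; the actual mechanism is the moment-locality comparison described above. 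The squaring cross-check and the identity $W_{-\alpha+1/2,0}(z) = e^{-z/2}z^{1/2}\psi(\alpha,1;z)$ are correct as stated, but they are peripheral and cannot repair the main chain of reasoning.
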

	
	\begin{remark} The proof of Theorem~\ref{thm:guido_peter} has been obtained in \cite{Mazzuca2021} by comparing the antisymmetric Gaussian   $\alpha$-ensemble \eqref{eq:alphaense} with the antisymmetric Gaussian $\beta$-ensemble at high temperature, which was considered in the same paper.
	\end{remark}
	
	We notice that performing the change of coordinates $a_j = x_j^2$, the Gibbs ensemble \eqref{VolterraGibbs} reads:
	\begin{equation}
		\di\mu_{\Volt}(\bx) = \frac{\prod_{j=1}^N x_j^{2\eta-1}e^{-\beta \sum_{j=1}^N x^2_j}\mathbbm{1}_{\R_+}(x_j)\di \bx }{Z_N^{\Volt}(\beta,\eta)}\,,
	\end{equation}
	which, up to a rescaling $x_j \to x_j/\sqrt{\beta}$ and for the extra term $x_N$ in the  probability distribution, is exactly the distribution \eqref{Alfa Gauss Measure} of the matrix $L_\alpha$. Furthermore, the matrix $L_3$ is a 2 rank perturbation of the matrix $L_\alpha$.  Therefore, by a corollary of \cite[Theorem A.41]{BaiBook} and Theorem \ref{thm:guido_peter}, we obtain the following.
	
	\begin{corollary}
		Consider the matrix $L_3$ in \eqref{AntiVlax}   endowed  with the Gibbs measure $\di\mu_{\Volt}$ \eqref{VolterraGibbs}. Then, the density of states of the matrix $ L_3$ is explicitly given by
		\begin{equation}
			\nu_{\Volt}(x) = \sqrt{\beta}\partial_\eta\left( \eta\theta_{\eta}(\sqrt{\beta}x)\right)\di x\,,
		\end{equation}
		where $\theta_\alpha(x)$ is given in \eqref{eq:theta}.
	\end{corollary}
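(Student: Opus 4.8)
The plan is to mirror the argument already used for the exponential Toda lattice in Corollary~\ref{thm:ExpToda Gibbs}: reduce the spectral problem for $L_3$ to the antisymmetric Gaussian $\alpha$-ensemble of Theorem~\ref{thm:guido_peter} by a change of variables, a global rescaling, and a finite-rank perturbation argument. Since $L_3$ is real and antisymmetric, its eigenvalues are purely imaginary, and the relevant density of states is the limiting distribution of the imaginary parts $\Im(\lambda_j)$; this is exactly the quantity controlled by Theorem~\ref{thm:guido_peter}.

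First I would pass from the coordinates $\ba$ to $\bx$ via $a_j = x_j^2$. Under this substitution the invariant density $\prod_j a_j^{\eta-1}e^{-\beta a_j}$ of \eqref{VolterraGibbs} picks up the Jacobian $\di a_j = 2 x_j\,\di x_j$ and becomes proportional to $\prod_j x_j^{2\eta-1} e^{-\beta x_j^2}\,\di x_j$ on $\R_+^N$, so the $x_j$ are i.i.d.\ scaled chi-distributed variables. Rescaling $x_j \mapsto x_j/\sqrt{\beta}$ removes $\beta$ from the exponent and produces precisely the density $f_{2\eta}$ of the antisymmetric Gaussian ensemble \eqref{Alfa Gauss Measure} with $\alpha = \eta$. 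At the level of matrices this rescaling is simply $L_3 \mapsto L_3/\sqrt{\beta}$, since every nonzero entry of $L_3$ equals $\pm\sqrt{a_j} = \pm x_j$ and is homogeneous of degree one in $\bx$.

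Next I would compare $\sqrt{\beta}\,L_3$ with the ensemble matrix $L_\alpha$ of \eqref{eq:alphaense}. The two matrices carry the identical sub- and super-diagonal entries $x_1,\dots,x_{N-1}$; they differ only through the periodic boundary term $j = N$ in \eqref{LaxVolterra}, namely $\sqrt{a_N}\,(E_{N,1} - E_{1,N})$, which has rank two. Hence $\sqrt{\beta}\,L_3$ is a rank-two perturbation of $L_\alpha$ with $\alpha = \eta$, and by the finite-rank stability of the empirical spectral distribution (a corollary of \cite[Theorem~A.41]{BaiBook}) the two matrices share the same density of states as $N \to \infty$. Applying Theorem~\ref{thm:guido_peter} with $\alpha = \eta$ then shows that $\sqrt{\beta}\,L_3$ has density of states $\partial_\eta\bigl(\eta\,\theta_\eta(x)\bigr)\,\di x$.

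Finally I would undo the rescaling. Since the eigenvalues of $L_3$ are those of $\sqrt{\beta}\,L_3$ divided by $\sqrt{\beta}$, the imaginary parts scale the same way, and pushing the density forward through $x \mapsto x/\sqrt{\beta}$ introduces the Jacobian factor $\sqrt{\beta}$, yielding $\nu_{\Volt}(x) = \sqrt{\beta}\,\partial_\eta\bigl(\eta\,\theta_\eta(\sqrt{\beta}\,x)\bigr)\,\di x$, as claimed. I do not expect a genuine obstacle: every step transcribes the exponential Toda computation, and the only points demanding care are the bookkeeping of the Jacobian in the change of variables, verifying that the periodic corner term is truly rank two, and tracking the $\sqrt{\beta}$ factor consistently given that it is the imaginary parts, rather than the full complex eigenvalues, whose distribution is being transformed.
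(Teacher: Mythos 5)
Your proposal is correct and follows essentially the same route as the paper: the change of variables $a_j = x_j^2$, the rescaling $x_j \mapsto x_j/\sqrt{\beta}$ identifying the entries with the antisymmetric Gaussian $\alpha$-ensemble at $\alpha=\eta$, the observation that the periodic corner term makes $L_3$ a rank-two perturbation of $L_\alpha$ handled via \cite[Theorem A.41]{BaiBook}, and the application of Theorem~\ref{thm:guido_peter} followed by the $\sqrt{\beta}$ Jacobian. The paper's argument is exactly this, stated in the paragraph preceding the corollary, so no further comparison is needed.
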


	\subsection{Parameter Limit}
	\label{SEC:PAR_LIM_VOLT}
	
	As for the case of exponential Toda \eqref{ExpToda}, in this section we consider the low-temperature regime of the Volterra lattice, namely the limit $\eta,\beta\to\infty$, in such a way that $\eta = \beta \wt\eta$,  with  $\wt \eta$  in a compact set of $ \R_+$, and we compute the density of states of the matrix $ L_3$ \eqref{LaxVolterra} in this regime.

	Applying the same techniques of Section \ref{sec:ExpTodaLimit}, we conclude that the density of states of the matrix $ L_3$  in the low-temperature limit coincides with the one of the matrix $L_\infty$, where

	\begin{equation}
		L_\infty = \begin{pmatrix} 0 & \sqrt{\wt \eta}&&& -\sqrt{\wt \eta} \\
			-\sqrt{\wt \eta}& 0 & \sqrt{\wt \eta} &&& \\
			&\ddots & \ddots & \ddots && \\
			&&\ddots&\ddots&\sqrt{\wt\eta}\\
			\sqrt{\wt \eta}&&&-\sqrt{\wt \eta}&0\end{pmatrix}\,.
	\end{equation}
	Since the matrix $L_\infty$ is circulant, we can readily compute its eigenvalues as
	\begin{equation}
		\lambda_j = 2i\sqrt{\wt\eta}\sin\left(2\pi \frac{j}{N} \right)\,, \quad j = 1,\ldots, N\,.
	\end{equation}
	From this explicit formula, it follows that the density of states of the matrix $ L_\infty$ reads 
	\begin{equation}
		\nu_{ L_\infty} = \frac{1}{2\pi} \frac{1}{\sqrt{4\wt\eta - x^2}} \mathbbm{1}_{(-2\sqrt{\wt\eta},2\sqrt{\wt\eta})}(x)\di x\,.
	\end{equation}
	Such measure coincides with  the measure  $ \nu_{L^\alpha}$ in the low-temperature limit, and it belongs to the class of Arcsin distributions. Thus, we just proved the following.
	
	\begin{proposition}
		Consider the Gibbs ensemble $\di\mu_{Volt}$ of the Volterra lattice \eqref{VolterraGibbs}, in the low-temperature limit, i.e. $\beta,\eta \to\infty$, in such a way that $\eta = \wt \eta\beta$, where
		$\wt \eta$ is in a compact subset of  $\R_+$. Then, the density of states $\nu_{\Volt}$ of the Lax matrix $L_3$ in  \eqref{LaxVolterra}  converges,  in this regime,  to
		
		\begin{equation}
			\nu_{\Volt} = \frac{1}{2\pi} \frac{1}{\sqrt{4\wt\eta - x^2}} \mathbbm{1}_{(-2\sqrt{\wt\eta},2\sqrt{\wt\eta})}(x)\di x\,.
		\end{equation}
	\end{proposition}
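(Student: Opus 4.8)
The plan is to mirror the low-temperature argument already carried out for the exponential Toda lattice in Section~\ref{sec:ExpTodaLimit}. After the change of variables $a_j = x_j^2$, the Gibbs measure $\di\mu_{\Volt}$ makes the $x_j$ independent and identically distributed with density proportional to $x_j^{2\eta-1}e^{-\beta x_j^2}$ on $\R_+$, i.e. a rescaled $\chi_{2\eta}$ distribution. Since the nonzero entries of $L_3$ in \eqref{LaxVolterra} are exactly $\pm\sqrt{a_j}=\pm x_j$, the first step is to identify the weak limit of a single $x_j$ in the scaling $\eta=\wt\eta\beta$, $\beta\to\infty$, and deduce the entrywise limit of the random matrix $L_3$.

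First I would apply the Laplace method, exactly as in the Toda case: for any bounded continuous test function $h$,
\begin{equation}
\lim_{\beta\to\infty}\frac{\int_0^\infty h(x)\,e^{\beta\left(2\wt\eta\log x - x^2\right)}\di x}{\int_0^\infty x^{2\wt\eta\beta}e^{-\beta x^2}\di x}=h\!\left(\sqrt{\wt\eta}\right),
\end{equation}
because the exponent $2\wt\eta\log x - x^2$ attains its unique maximum at $x_0=\sqrt{\wt\eta}$. Hence $x_j\rightharpoonup\sqrt{\wt\eta}$ for every $j$, and $L_3$ converges entrywise to the deterministic circulant matrix $L_\infty$ with off-diagonal entries $\pm\sqrt{\wt\eta}$.

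The second step is to upgrade this entrywise convergence to convergence of the density of states and to justify that the two limits $N\to\infty$ and $\beta\to\infty$ may be exchanged. Here I would reuse the moment argument from Section~\ref{sec:ExpTodaLimit}: because $L_3$ is banded with i.i.d.\ entries along its diagonals, the quantity $\tfrac1N\la\tr{L_3^k}\ra$ depends only on a number of variables proportional to $k$, so the factor $N$ cancels and $\lim_N$ and $\lim_\beta$ commute when passed directly to the variables $x_j$. The rank-two perturbation relating $L_3$ to the antisymmetric ensemble $L_\alpha$ in \eqref{eq:alphaense} does not affect the limiting spectral measure by \cite[Theorem A.41]{BaiBook}, so the corner terms can be discarded without changing $\nu_{\Volt}$.

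Finally I would diagonalize the circulant matrix $L_\infty$ by the discrete Fourier transform \cite{gray2006toeplitz}, obtaining the purely imaginary eigenvalues $\lambda_j=2i\sqrt{\wt\eta}\,\sin(2\pi j/N)$; their imaginary parts $2\sqrt{\wt\eta}\,\sin(2\pi j/N)$ equidistribute, as $N\to\infty$, according to the pushforward of the uniform law on the circle under $t\mapsto 2\sqrt{\wt\eta}\,\sin(2\pi t)$, which is exactly the arcsine distribution on $(-2\sqrt{\wt\eta},2\sqrt{\wt\eta})$ appearing in the statement. I expect the only genuinely delicate point to be the interchange of limits in the second step; the Laplace asymptotics and the circulant eigenvalue computation are routine, and the whole argument runs in parallel to the exponential Toda case, the antisymmetry of $L_3$ merely replacing the real spectrum by a purely imaginary one.
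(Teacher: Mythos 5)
Your proposal is correct and follows essentially the same route as the paper: the paper's proof of this proposition simply invokes ``the same techniques of Section~\ref{sec:ExpTodaLimit}'' --- namely the Laplace-method weak limit $x_j \rightharpoonup \sqrt{\wt\eta}$, the banded-matrix moment argument justifying the exchange of the limits $N\to\infty$ and $\beta\to\infty$, and the explicit diagonalization of the circulant matrix $L_\infty$ with eigenvalues $2i\sqrt{\wt\eta}\sin(2\pi j/N)$ --- all of which you spell out, including the harmless rank-two corner perturbation. One minor remark: the pushforward you compute in the last step has density $\frac{1}{\pi}\frac{1}{\sqrt{4\wt\eta-x^2}}$ on $(-2\sqrt{\wt\eta},2\sqrt{\wt\eta})$ (which integrates to one), so the prefactor $\frac{1}{2\pi}$ in the stated formula (and likewise in the analogous exponential Toda proposition) appears to be a normalization slip in the paper rather than a gap in your argument.
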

	
	\section{Generalization of the Volterra lattice: the INB $k$-lattices}
	\label{sec:INB}
	The Volterra lattice \eqref{Volterra} can be generalized  in a variety of ways. The most natural ones are two families of lattices described in
	\cite{Bogoyavlensky1991}  (see also \cite{Bogoyavlensky1988,Itoh1975,Narita1982} ) which include short range interactions.
	The first  family is called  \emph{additive
		Itoh--Narita--Bogoyavleskii (INB) $k$-lattice} and is defined by the equations
	\begin{align}
		\dot{a}_{i} &= a_{i} \left( \sum_{j=1}^{k} a_{i+j}- \sum_{j=1}^{k} a_{i-j} \right),
		\quad
		i = 1,\dots,N,\;N\geq k\in N.
		\label{eq:inbpadd}
		\end{align}
		The second family is called the
	\emph{multiplicative}  \emph{ Itoh--Narita--Bogoyavleskii (INB) $k$-lattice} and is defined by the equations
		\begin{align}
		\dot{a}_{i} &= a_{i} \left( \prod_{j=1}^{k} a_{i+j}- \prod_{j=1}^{k} a_{i-j} \right),
		\quad
		i = 1,\dots,N,\;N\geq k\in N.
		\label{eq:inbpmul}
	\end{align}
	In both cases we consider  the periodicity condition
	$a_{j+N}=a_{j}$ holds.\\
	Setting $k=1$, we recover from both lattices the Volterra one
	\eqref{Volterra}.  Further generalizations of the INB lattice were recently considered in
	\cite{EKV}.
	
	A crucial difference in the two models is that in the additive lattice \eqref{eq:inbpadd}
	the interaction is on arbitrary number of points,  but  the non-linearity is still
	quadratic like the original Volterra lattice \eqref{Volterra}; on the other hand
	, the  multiplicative lattice \eqref{eq:inbpmul} 
	admits non-linearity of \emph{arbitrary order}.
	Moreover, both families admit the KdV equation as continuum limits,
	see \cite{Bogoyavlensky1991}.

	As mentioned earlier, the additive INB $k$-lattice is an integrable system for
	all $k\in\N$ and $i\in\Z$, since they all admit a Lax pair formulation \eqref{Lax0}.
	For the additive INB lattice \eqref{eq:inbpadd}, it reads
	\begin{align}
		L^{(+,k)} & = \sum_{i=1}^{N}
		\left( a_{i+k} E_{i+k,i} + E_{i,i+1}\right),
		\label{eq:inbpLadd}
		\\
		A^{(+,k)} & = 
		\sum_{i=1}^N\left(\sum_{j=0}^{k} a_{i+j}\right) E_{i,i}
		+ E_{i,i+k+1} \, ,
		\label{eq:inbpMadd}
	\end{align}
	we recall that we are always considering periodic boundary conditions, 
	so for all $j \in\Z$, $a_{j+N} = a_j$ and $E_{i,j+N} =E_{i+N,j} = E_{i,j}$.
	The constants of motion obtained through this Lax pair are in involution with respect
	to the Poisson bracket
	\begin{equation} 
		\left\{ a_{j},a_{i} \right\}_{(+,k)} = 
		a_{j}a_{i} \left( \sum_{s=1}^{k} \delta_{j+s,i} - \sum_{s=1}^{k}\delta_{j-s,i} \right).
		\label{eq:pbinbadd}
	\end{equation}
	Then, the additive INB $k$-lattice \eqref{eq:inbpadd} can be written as
	\begin{equation}
		\dot{a}_{i} = \left\{ a_{i}, H_{1} \right\}_{(+,k)},
		\label{eq:inbadd}
	\end{equation}
	where the Hamiltonian function $H_{1}=\sum_{j=1}^{N} a_{j}$ is the same
	as in equation \eqref{eq:hamvoltN}. In the same way, it is possible to
	prove that the function $H_{0}=\prod_{j=1}^{N}a_{j}$ is a first
	integral for the additive INB $k$-lattice \eqref{eq:inbpadd} as well.

	Similarly, the multiplicative INB $k$-lattices can be endowed with a Lax Pair for all $k\in\N$, therefore it is another example of integrable systems.
	Specifically, for the periodic case we presented in equation~\eqref{eq:inbpmul}, the Lax pair
	reads
	\begin{align}
		\label{eq:inbpLmul}
		L^{(\times,k)} & = \sum_{i=1}^{N}\left( a_{i} E_{i,i+1} 
		+E_{i+k,i} \right),\\
		\label{eq:inbpMmul}
		A^{(\times,k)} & = \sum_{i=1}^{N} \left(\prod_{j=0}^{k} a_{i+j}\right) E_{i,i+k+1}\,.
	\end{align}
	
	We notice that both $H_1 = \sum_{j=1}^Na_j$, and $H_0 = \prod_{j=1}^N a_j$ are constants of motion for these systems, for all $k \in \mathbb{N}$.
	
	\begin{remark}
		For fixed $k$, there exists a transformation that maps the multiplicative INB $k$-lattice to the additive one. Namely, consider the system \eqref{eq:inbpmul} and define the new set of variables
		\begin{equation}
			b_i := a_i \cdot \dots \cdot a_{i+k-1},
			\quad
			i = 1,\dots,N,
			\label{eq:MultToAdd}
		\end{equation}
		where the indices are taken modulo $N$. Then, it is immediate to see that
		\begin{equation}
			\dot{a}_{i} = a_{i} \left( b_{i+1} - b_{i+k-1} \right),
			\quad
			i = 1,\dots,N,
		\end{equation}
		which in turn, due to telescopic summations, implies
		\begin{equation*}
			\dot{b}_{i} = b_{i} \left( \sum_{j=1}^{k} b_{i+j}- \sum_{j=1}^{k} b_{i-j} \right),
			\quad
			i = 1,\dots,N,
		\end{equation*}
		which is \eqref{eq:inbpadd}. The transformation \eqref{eq:MultToAdd} is invertible only when $k$ and $N$ are co-prime, for a more detailed discussion see \cite{Bogoyavlensky1991}.
	\end{remark}

	\subsection{Gibbs Ensemble}
	
	We want to introduce an invariant measure  for the INB lattices
	(\ref{eq:inbpadd}) and \eqref{eq:inbpmul}. Since $H_0 = \prod_{j=1}^N a_j$ and $H_1 = \sum_{j=1}^N a_j$ are constants of motion for all the INB lattices, and both systems are divergence free, in view of \eqref{eq:GGE_formal}  we can consider as invariant measure the same one that we used for the Volterra lattice, namely
	
	\begin{equation} 
		\label{INBGibbs}
		\di\mu_{\INB}\left({\mathbf a;\eta,\beta}\right) = \frac{e^{-\beta H_1 + (\eta-1)\log H_0} \d  \ba }{%
			\int_{\R^N_+}e^{-\beta H_1 + (\eta-1)\log H_0} \d \ba} = 
		\frac{\prod_{j=1}^N a_j^{\eta - 1} e^{-\beta \sum_{j=1}^N a_j} \di \ba }{%
			Z_N^{\INB}(\beta,\eta)}\,, \quad \beta,\eta > 0,
	\end{equation} 
	where the normalization constant $Z_{N}^{\INB}(\beta,\eta)$ has the value given in
	equation \eqref{eq:ZNvolterravalue}.
	For example the random  Lax matrix $L^{(+,k)}$  takes the form
	\vspace{10pt}
		\[
	L^{(+,k)}\simeq\dfrac{1}{2\beta}
	\begin{pmatrix}
	0&1&0&\cdots&\tikz[remember picture]\node[inner sep=0pt] (b) {$\chi^2_{2\eta}$};&0&0&0\\
	0&0&1&\cdots&0&\chi^2_{2\eta}&0&0\\
	0&0&0&1&\cdots&0&\chi^2_{2\eta}&0\\
	\vdots&\ddots&\ddots&\ddots&\ddots&\ddots&&\\
	\chi^2_{2\eta}&0&\cdots&\cdots&0&1&0&\tikz[remember picture]\node[inner sep=0pt] (a) {0};\\
	0&\chi^2_{2\eta}&0\cdots&&\ddots&\ddots&\ddots&\\
	\vdots&\ddots&\ddots&\ddots&\ddots&0&0&1\\
	1&0&\cdots&\chi^2_{2\eta}&\cdots&0&0&0\\
	\end{pmatrix}
	 \tikz[overlay]\draw[thick,<->];
	\]
	  \begin{tikzpicture}[overlay, remember picture]
		\draw[stealth-] (a.east)++(1,0) -- node[above] {$k+1$ row }++ (1,0);
		\draw[stealth-] (b.north)++(0,0.1) -- node[right] {$N-k$ column }++ (0,0.5);
	\end{tikzpicture}
	where the $\chi^2_{2\eta}$ distribution has density $\frac{ a^{2\eta-1}}{2^\eta\Gamma(\eta)}e^{-\frac{a}{2}}$.
	Unlike the  Lax matrix of the  Volterra lattice,  the Lax matrices of  these generalizations lack of a known random matrix model to compare with. For this reason,  we present   numerical investigations of  the density of states for these random Lax matrices for several values of the parameters $k, \eta$ and $\beta$,  see Figures \ref{fig:INB_additive1}-\ref{fig:INB_multiplicative1}. We notice that, for both the additive lattice and the multiplicative one, the density of states seems to possess a discrete rotational symmetry. In this spirit, we prove the following
	\begin{lemma}
		Fix $\ell\in \N$. Then for $N$ large enough
		\begin{equation}
			\trace{(L^{(+,k)})^\ell} = \trace{(L^{(\times,k)})^\ell} = 0\,,
		\end{equation}
		if $\ell$ is not an integer multiple of $k+1$.
	\end{lemma}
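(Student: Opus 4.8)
The plan is to exploit the fact that both Lax matrices are sums of two \emph{weighted cyclic shifts}, one advancing the index by $1$ and one retreating it by $k$, and to keep track of the net shift produced by a power. Write $S:=\sum_{i=1}^{N}E_{i,i+1}$ for the cyclic shift permutation matrix, so that $S^{-k}=\sum_{i=1}^{N}E_{i+k,i}$, and set $D_{\ba}:=\diag(a_1,\dots,a_N)$. Reading off \eqref{eq:inbpLadd} and \eqref{eq:inbpLmul} and using the periodic conventions $E_{i,j+N}=E_{i+N,j}=E_{i,j}$, one verifies directly that
\begin{equation}
L^{(+,k)}=S+D_{\ba}\,S^{-k},\qquad L^{(\times,k)}=D_{\ba}\,S+S^{-k}.
\end{equation}
In particular the two matrices have the same support: the only non-zero entries sit on the first super-diagonal and on the $k$-th sub-diagonal (each with its periodic corner). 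The weights $a_i$ are distributed differently between the two summands, but this will play no role.

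First I would expand $\trace{(L^{(+,k)})^{\ell}}$ as a sum over words $X_1X_2\cdots X_{\ell}$ with each letter $X_p\in\{S,\;D_{\ba}S^{-k}\}$. For any diagonal matrix $D$ one has the commutation relation $SD=\sigma(D)S$, where $\sigma$ cyclically permutes the diagonal entries; applying it repeatedly lets me push every shift factor to the right of a word. A word using the letter $S$ exactly $p$ times and the letter $D_{\ba}S^{-k}$ exactly $q=\ell-p$ times then collapses to $\wt{D}\,S^{\,p-qk}$ for some diagonal matrix $\wt{D}$ built from cyclically shifted copies of $D_{\ba}$. Since $S^{m}$ carries a non-zero diagonal entry only when $N\mid m$, we have $\trace{\wt{D}\,S^{m}}=0$ unless $m\equiv 0\pmod N$, so only the words with $p-qk\equiv 0\pmod N$ can contribute.

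The remaining step is arithmetic. A contributing word satisfies $p+q=\ell$ together with $p-qk\equiv 0\pmod N$, and $0\le p,q\le \ell$ forces $-\ell k\le p-qk\le \ell$. Hence, as soon as $N$ is large enough that this interval contains no non-zero multiple of $N$ — for instance any $N>\ell(k+1)$ will do — the congruence upgrades to the equality $p-qk=0$. Combined with $p+q=\ell$ this gives $q(k+1)=\ell$, so a non-vanishing word exists only when $k+1$ divides $\ell$. Therefore, when $\ell$ is not an integer multiple of $k+1$ every word vanishes and $\trace{(L^{(+,k)})^{\ell}}=0$. Because $L^{(\times,k)}=D_{\ba}S+S^{-k}$ has the very same pair of shifts $+1$ and $-k$, the displacement bookkeeping is identical and the conclusion $\trace{(L^{(\times,k)})^{\ell}}=0$ follows verbatim.

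I expect the main (and only real) obstacle to be the passage from the congruence $p-qk\equiv 0\pmod N$ to the exact equality $p-qk=0$: this is precisely where the hypothesis ``$N$ large enough'' enters quantitatively, and where the periodic corner terms would otherwise allow spurious closed loops of length $\ell$ to wrap around the cycle and contribute. Everything else — checking the decomposition against the modular index conventions and the commutation identity $SD=\sigma(D)S$ — is routine bookkeeping.
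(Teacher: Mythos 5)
Your proof is correct, and its combinatorial core coincides with the paper's: the paper also expands $\trace{(L^{(+,k)})^{\ell}}$ entrywise, observes that each factor moves the row index by $+1$ or $-k$, and identifies every non-vanishing summand with a $(1,k)$-Dyck path from $(0,0)$ to $(\ell,0)$ with steps $(1,1)$ and $(1,-k)$; closure forces $n+m=\ell$ and $n-mk=0$, hence $(k+1)\mid \ell$ — exactly your arithmetic with $p=n$, $q=m$. What you do differently is the packaging: the decomposition $L^{(+,k)}=S+D_{\ba}S^{-k}$, the commutation rule $SD=\sigma(D)S$, and the fact that $\trace{\wt{D}S^{m}}=0$ unless $N\mid m$ replace the paper's path bookkeeping with operator algebra. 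This buys you two things the paper leaves implicit. First, a quantitative meaning for ``$N$ large enough'': the congruence $p-qk\equiv 0 \pmod N$ upgrades to the equality $p-qk=0$ once no non-zero multiple of $N$ lies in $[-\ell k,\ell]$ (your $N>\ell(k+1)$, indeed $N>\ell k$, suffices); the paper passes silently from index moves ``modulo $N$'' to honest paths in $\Z^2$, which is precisely the wrap-around issue you flag. Second, your treatment of the multiplicative lattice really is verbatim, since $L^{(\times,k)}=D_{\ba}S+S^{-k}$ has the same shift content and the weights play no role in the displacement count, whereas the paper only asserts that the multiplicative case ``is analogous.''
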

	
	\begin{proof}
		We prove the statement for the additive case, the proof in the multiplicative one is analogous.

		The main idea is to relate each addendum appearing in $\trace{(L^{(+,k)})^\ell}$ to a specific path in the $\Z^2$ plane, and prove that such a path exists if and only if $\ell = m(k+1)$ for some $m \in \mathbb{N}$.
		In particular, we can focus on the first element of the diagonal of $(L^{(+,k)})^\ell$, write $(L^{(+,k)})^\ell(1,1)$, since all the other ones can be recovered shifting the indices. First, we write $(L^{(+,k)})^\ell(1,1)$ as

		\begin{equation}
			\label{eq:expl_trace}
			(L^{(+,k)})^\ell(1,1) = \sum_{i_1,\ldots,i_{\ell-1} = 1}^N  L^{(+,k)}(1,i_1)L^{(+,k)}(i_1,i_2)\cdots L^{(+,k)}(i_{\ell-1},1)\,.
		\end{equation}
		We notice that, due to the structure of $L^{(+,k)}$, if $L^{(+,k)}(1,i_1)\cdots L^{(+,k)}(i_{\ell-1},1)$ is not zero, then either $i_{s+1} = i_{s}+1$ or $i_{s+1} = i_s -k $ modulo $N$. Now, consider paths in the $\Z^2$ plane from the point $(0,0)$ to $(\ell,0)$, such that the only permitted steps are the up step $(1, 1)$ and  the down step $(1,-k)$. Since these paths resemble the classical Dyck paths, we call them $(1,k)$-Dyck paths of length $\ell$. Given a non-zero element of the product in \eqref{eq:expl_trace}, we can construct the corresponding path in the following way.
		We start at $(0,0)$, then if $|i_1-1| = 1$ we make an up step of height $1$, otherwise we make a down step of height $k$, and so on. 
		
		For each path, let $n$ be the number of up steps and $m$ the number of down steps, then
		
		\begin{equation}
			m + n = \ell\,,\quad n-mk =0\,,
		\end{equation}
		since there is a total of $\ell$ step, and the path has to go back to height $0$. Thus, we deduce that
		\begin{equation}
			m(k+1) = \ell\,,
		\end{equation}
		and the claim is proven.
	\end{proof}
	
	\begin{figure}
		\centering
		\includegraphics[scale = 0.25]{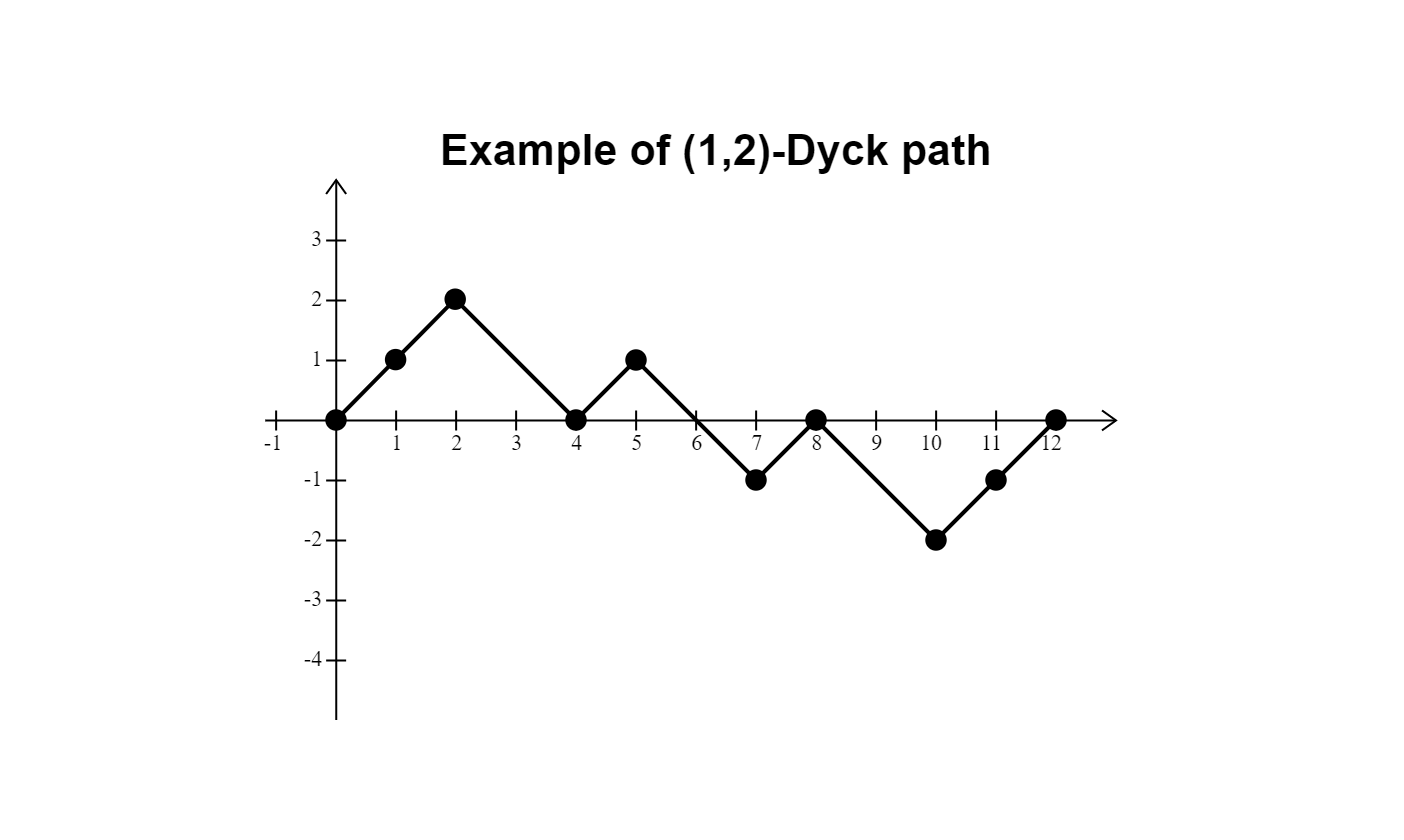}
		\caption{Example of $(1,2)$-Dick path of  length 12}
		\label{fig:path}
	\end{figure}
	
	\begin{remark}
		The previous result implies that the only non-zero moments of the densities of states  $\nu_{\INB,+,k}, \nu_{\INB,\times,k}$, provided they exist, are the ones which are an integer multiple of $k+1$.
	\end{remark}

	\begin{figure}[h]
		\centering
		\includegraphics[scale=0.52]{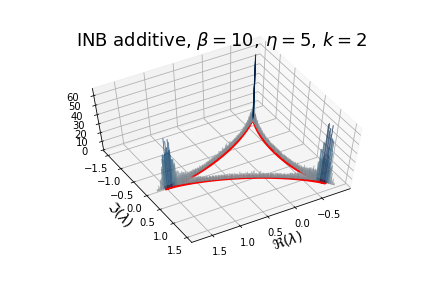}
		\includegraphics[scale=0.52]{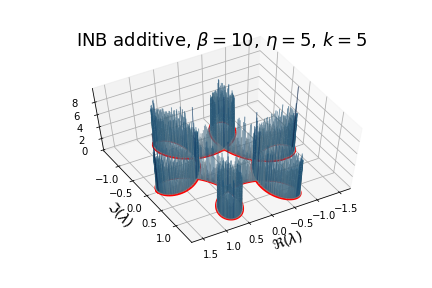}
		
		\includegraphics[scale=0.52]{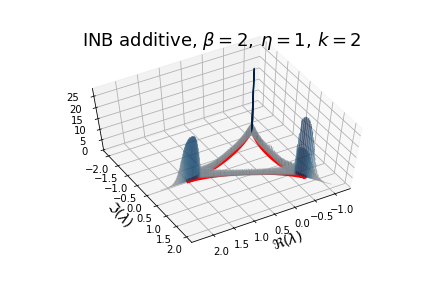}
		\includegraphics[scale=0.52]{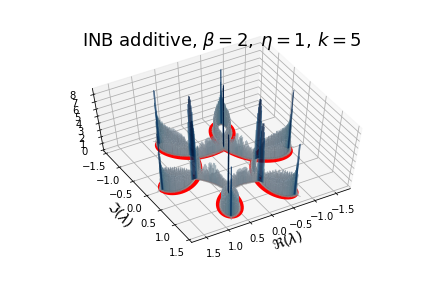}
		\caption{Eigenvalues of INB additive lattice for $k=2$ (left) and $k=5$ (right). $N=1000$ and $6000$ trials performed, in red the corresponding hypotrochoid $\gamma_{+,k}$  defined in equation \eqref{eq:support_INB_guess}. We observe that the examples on the left panel correspond to the case $\wt\eta=\frac{\eta}{\beta}=\frac{1}{k}$ that in the limiting case $\eta,\beta\to\infty$ gives the  arcsin density of states in \eqref{nu_infty}
			where the edges are the cusps of the hypotrochoid. This observation explains the very high peaks located at the cusps.}
		\label{fig:INB_additive1}
	\end{figure}
	
	Another interesting feature of these measures is that their supports seem to be exponentially localized to  one dimensional contours. Specifically, it appears that the supports are the two hypotrochoids $\gamma_{+,k},\, \gamma_{\times,k}$, respectively
	\begin{equation}
		\label{eq:support_INB_guess}
		\gamma_{+,k}(t,\eta,\beta) =  e^{-i t} + \frac{\eta}{\beta} e^{i k t}\,, \qquad \gamma_{\times,k}(t, \eta,\beta) = \frac{\eta}{\beta} e^{ -i t} + e^{i kt}\,,\quad t \in [0;2\pi)\,.
	\end{equation}
	
	This feature is highlighted in figures \ref{fig:INB_additive1}-\ref{fig:INB_multiplicative1}, where we plot the empirical density of states and the corresponding hypotrochoid.
	This characteristic is important since this type of curves are also related to the density of some cyclic digraph, see \cite{Aceituno2019}, and may serve as a link between these two topics.
	
	All these observations lead us to formulate the following conjecture
	
	\begin{conjecture}
		Consider the two matrices $L^{(+,k)},\, L^{(\times,k)}$ as in \eqref{eq:inbpLadd}, \eqref{eq:inbpLmul} both endowed with the probability distribution $\di\mu_{\INB}$ \eqref{INBGibbs}. Then, the densities of states $\nu^{\gamma,\beta}_{\INB,+,k}$ and $ \nu^{\gamma,\beta}_{\INB,\times,k}$ exist, and have a discrete rotational symmetry, namely
		\begin{equation}
			\nu^{\gamma,\beta}_{\INB+,k}(\di z) = \nu^{\gamma,\beta}_{\INB+,k}\left(e^{\frac{2\pi i}{k+1}}\di z\right),\qquad \nu^{\gamma,\beta}_{\INB\times,k}(\di z)= \nu^{\gamma,\beta}_{\INB\times,k}\left(e^{\frac{2\pi i}{k+1}}\di z\right)\,.
		\end{equation}
		Moreover, the densities are exponentially localized in a neighbourhood of the two 
		hypotrochoids $\gamma_{+,k}(t,\eta,\beta) $ and $\gamma_{\times,k}(t,\eta,\beta)$  in \eqref{eq:support_INB_guess} respectively.
	\end{conjecture}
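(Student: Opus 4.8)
The plan is to separate the three assertions. The rotational symmetry can be obtained \emph{exactly} at finite $N$ by an explicit conjugation, while the existence of the density of states and its localization are genuinely analytic and I would approach them through a transfer-matrix (Hermitization) analysis.

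I would first dispose of the symmetry. Fix $\omega=e^{2\pi i/(k+1)}$ and restrict to $N$ with $(k+1)\mid N$, so that $D=\diag(\omega^{-1},\omega^{-2},\dots,\omega^{-N})$ is well defined on $\Z/N\Z$. A direct entrywise computation, using only $\omega^{k+1}=1$, gives the \emph{deterministic} identities
\begin{equation}
	D\,L^{(+,k)}\,D^{-1}=\omega\,L^{(+,k)},\qquad D\,L^{(\times,k)}\,D^{-1}=\omega\,L^{(\times,k)},
\end{equation}
valid for every realization of the $a_j$: on the superdiagonal the conjugation contributes $d_i/d_{i+1}=\omega$, and on the $k$-th subdiagonal it contributes $d_{i+k}/d_i=\omega^{-k}=\omega$. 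Hence $L$ and $\omega L$ are similar, their spectra coincide, and the empirical eigenvalue measure is exactly invariant under $z\mapsto\omega z$ for each such $N$; passing to the limit along $N\equiv 0\pmod{k+1}$ transfers the invariance to the density of states once the latter is known to exist. I would emphasize that the vanishing-moment Lemma proved above does not by itself give this: for a non-normal matrix $\tr{L^\ell}$ recovers only the holomorphic moments $\int z^\ell\,\di\nu$, which do not determine a planar measure, so the conjugation is what actually forces the symmetry.

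For existence and the identification of the hypotrochoid I would pass to transfer matrices. The eigenvalue equation for $L^{(+,k)}$ reads $v_{i+1}=z v_i-a_i v_{i-k}$, which I encode by random $(k+1)\times(k+1)$ transfer matrices $T_i(z)$ acting on $(v_i,\dots,v_{i-k})$; with periodic boundary conditions the eigenvalues of $L^{(+,k)}$ are exactly the zeros of $z\mapsto\det(M_N(z)-I)$, where $M_N(z)=T_N(z)\cdots T_1(z)$. One has $\det(zI-L^{(+,k)})=c_N\,\det(M_N(z)-I)$ with $c_N=\pm\prod_i a_i$ independent of $z$, so this prefactor only shifts the log-potential by a $z$-independent constant, and the candidate limit is described by $U(z)=\lim_N\frac1N\log\lvert\det(M_N(z)-I)\rvert$ with $\nu=\frac{1}{2\pi}\Delta U$. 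I would make this rigorous by Girko's Hermitization, reducing existence to the almost-sure convergence and uniform integrability of $\frac1N\log\lvert\det(zI-L^{(+,k)})\rvert$ for Lebesgue-a.e.\ $z$. In the deterministic degeneration $a_j\equiv\eta/\beta$ the $T_i$ coincide, $M_N=T(z)^N$, and the spectral condition collapses to $z=\zeta+\frac{\eta}{\beta}\zeta^{-k}$ with $\zeta$ on the unit circle --- precisely the hypotrochoid $\gamma_{+,k}$; this identifies the curve as the spectral curve of the symbol $f(\zeta)=\zeta+\frac{\eta}{\beta}\zeta^{-k}$. The multiplicative lattice is parallel, with recursion $v_{i+1}=(z v_i-v_{i-k})/a_i$ and symbol $f(\zeta)=\frac{\eta}{\beta}\zeta+\zeta^{-k}$ matching $\gamma_{\times,k}$.

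The localization is where I expect the main obstacle. For $z$ at distance $\ge\varepsilon$ from $\gamma_{+,k}$ I would show that the cocycle $\{T_i(z)\}$ has a Lyapunov spectral gap away from $0$ with overwhelming probability, by a large-deviation comparison with the deterministic cocycle, whose transfer-matrix eigenvalues have moduli bounded away from $1$ off the curve; here the Gamma$(\eta,\beta)$ law of the $a_j$, concentrated at the mean $\eta/\beta$ with exponential tails, supplies the needed deviation estimates. Such a gap forces $\lvert\det(M_N(z)-I)\rvert$ to behave like the product of the dominant factors with exponentially small relative fluctuations, so $U$ is harmonic there, $\nu$ carries no mass, and the probability of an eigenvalue near such a $z$ decays exponentially in $N$ and in the distance to the curve. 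The two serious difficulties are: (i) the instability of non-normal spectra, which requires a least-singular-value / anti-concentration bound on $zI-L^{(+,k)}$ to exclude spurious mass off the curve, and which I would control via the Hermitization together with the Gamma tail bounds; and (ii) upgrading Furstenberg-type positivity of the top Lyapunov exponent to quantitative uniform hyperbolicity with explicit exponential deviation bounds, complicated by the cyclic term $-I$ in $M_N(z)-I$ and by the non-self-adjoint structure of the $T_i(z)$. I would anchor the analysis at the exactly solvable deterministic cocycle and propagate the estimates perturbatively.
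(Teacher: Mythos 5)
First, a point of order: this statement is a \emph{conjecture} in the paper. The authors do not prove it; their supporting evidence consists of (a) the lemma that $\trace{(L^{(+,k)})^\ell}=\trace{(L^{(\times,k)})^\ell}=0$ unless $(k+1)\mid\ell$, (b) numerical experiments, and (c) the exact low-temperature limit, where the Lax matrices become circulant and their spectra lie precisely on the hypotrochoids \eqref{eq:support_INB_guess}. So there is no proof in the paper to compare against, and your proposal must be judged on its own merits as a partial attack on an open problem.

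Your symmetry argument is correct and is genuinely stronger than what the paper offers for that part of the conjecture. The gauge conjugation $D\,L^{(+,k)}\,D^{-1}=\omega\,L^{(+,k)}$ with $D=\diag(\omega^{-1},\dots,\omega^{-N})$, $\omega=e^{2\pi i/(k+1)}$, checks out entrywise, including the wrap-around entries, precisely because you restricted to $(k+1)\mid N$ (the periodicity $d_{i+N}=d_i$ forces $\omega^{N}=1$). It yields \emph{pathwise} invariance of the empirical spectral measure under $z\mapsto\omega z$ at finite $N$, and hence the claimed symmetry of the density of states conditional on its existence. Your criticism of the moment lemma is also apt: for non-normal matrices the traces $\trace{L^\ell}$ only give the holomorphic moments $\int z^\ell\,\di\nu$, and these do not determine a planar measure (the uniform measure on the unit circle and the Dirac mass at the origin share all such moments for $\ell\geq 1$), so the paper's lemma is consistent with the symmetry but does not imply it. This observation would be a worthwhile addition to the paper.

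For existence and localization, your Girko-hermitization plus transfer-cocycle program is the natural framework (it is essentially the Goldsheid--Khoruzhenko theory for non-self-adjoint banded random operators, extended from tridiagonal to $(k+1)$-band matrices), and your deterministic anchors are right: the companion-matrix spectral condition $z=\zeta+\tfrac{\eta}{\beta}\zeta^{-k}$, $|\zeta|=1$, does reproduce $\gamma_{+,k}$, matching the paper's low-temperature computation. But beyond the two obstacles you honestly flag (least-singular-value bounds and quantitative uniform hyperbolicity), there is a gap in the localization claim itself that your outline glosses over: in that theory the support of the limiting measure is a level set of Lyapunov exponents of the \emph{random} cocycle, and Lyapunov exponents are logarithmic averages, so at finite $(\beta,\eta)$ this curve need not coincide with the hypotrochoid of the mean $\eta/\beta$ --- one cannot exchange $\langle\log|\cdot|\rangle$ with $\log|\langle\cdot\rangle|$. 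The identification of the localization set with $\gamma_{+,k}(t,\eta,\beta)$ is exact only in the zero-fluctuation (low-temperature) limit; at finite temperature your large-deviation comparison with the deterministic cocycle can only show absence of spectrum far from the hypotrochoid, at distances comparable with the fluctuation scale of the $a_j$, which is order one here. So your program, even if completed, would plausibly prove existence, the rotational symmetry, and localization around a Lyapunov curve; whether that curve sits within a small neighbourhood of the hypotrochoid, as the conjecture (and the numerics) suggest, would require a separate quantitative comparison that your proposal does not yet contain.
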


	\begin{figure}
		\centering
		\includegraphics[scale=0.52]{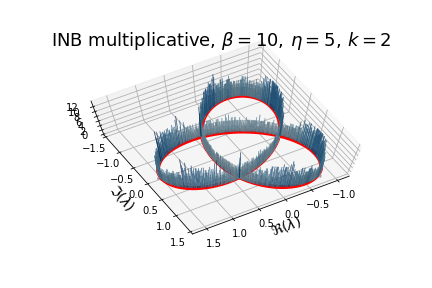}
		\includegraphics[scale=0.52]{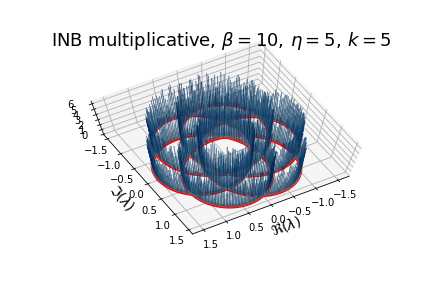}
		
		\includegraphics[scale=0.52]{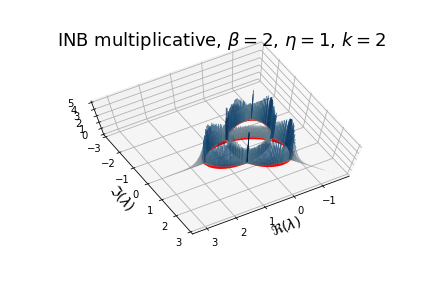}
		\includegraphics[scale=0.52]{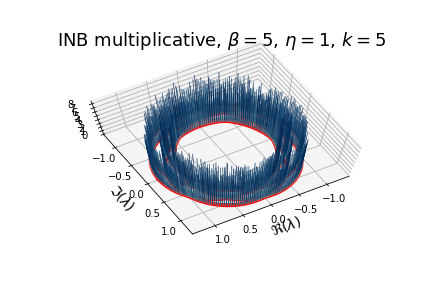}
		\caption{Eigenvalues of INB multiplicative lattice for $k=2$ (left) and $k=5$ (right). $N=1000$ and $6000$ trials performed, in red the corresponding hypotrochoid $\gamma_{\times,k}$}
		\label{fig:INB_multiplicative1}
	\end{figure}
	
	\begin{figure}
		\centering
		\includegraphics[scale=0.52]{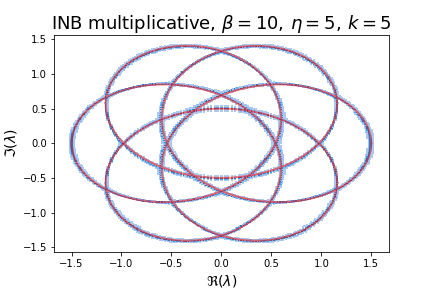}
		\includegraphics[scale=0.52]{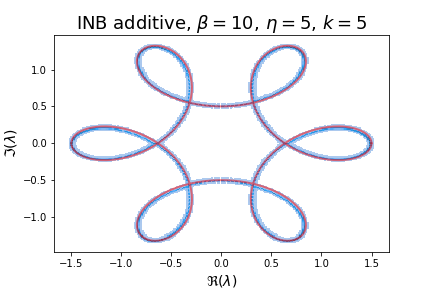}
		\caption{Eigenvalues of INB multiplicative and additive lattice for $k=5$, $N=1000$ and $6000$ trials performed, in red the corresponding hypotrochoid $\gamma_{\times,k}, \gamma_{+,k}$}
	\end{figure}

	\subsection{Parameter limit}
	As in the previous cases, although we are not able to give an explicit formula for the density of states of the INB lattices for general $\beta,\eta$, we can characterize this measure in the low-temperature limit.
	Specifically, we consider the limit as $\beta,\eta\to\infty$ in such a way that $\eta = \wt \eta \beta$, with $\wt\eta$ in a compact set of $\R_+$, and we compute the density of states of the matrices $L^{(+,k)}, L^{(\times,k)}$, endowed with the probability inherited from $\di\mu_{\INB}$ \eqref{INBGibbs}, in this limit. 
	
	The procedure is the same as in the case of Volterra (see Section \ref{SEC:PAR_LIM_VOLT}). Indeed, following the same line, we can conclude that the densities of states $\nu^{\infty}_{\INB,+,k}$ and $\nu^\infty_{\INB,\times, k}$ coincide with the densities of $\fL^{(+)}$ and $\fL^{(\times)}$ respectively, where
	
	\begin{equation}
		\fL^{(+,k)} =  \sum_{i=1}^{N}
		\left( \wt\eta E_{i+k,i} + E_{i,i+1}\right), \quad \fL^{(\times,k)}= \sum_{i=1}^{N}
		\left( E_{i+k,i} + \wt \eta E_{i,i+1}\right)\,.
	\end{equation}
	We notice that both matrices are circulant, thus we can compute their eigenvalues explicitly as
	
	\begin{equation}
		\label{eigenvalues}
		\lambda_j^{(+,k)} = e^{-2\pi i \frac{j}{N}} + \wt \eta e^{2\pi i \frac{jk}{N}}\,, \quad \lambda_j^{(\times,k)} = \wt \eta e^{-2\pi i\frac{j}{N}} + e^{2\pi i \frac{jk}{N}}\,,
	\end{equation}
	here $j = 1, \ldots, N$. Thus, in the large $N$ limit, we deduce that the support of the measures $\nu^{\infty}_{\INB+,k}$ and $\nu^\infty_{\INB\times, k}$   are the hypotrochoids

	\begin{equation}
		\label{eq:support_INB_limit}
		\gamma_{+,k}(t,\wt\eta,1)=  e^{-i t} + \wt \eta e^{ i k t}\,, \quad  \gamma_{\times,k}(t,\wt\eta,1) = \wt \eta e^{- i t} + e^{i kt}\,,\quad t \in [0;2\pi)\,,
	\end{equation}
	and  the limiting eigenvalue densities  are
	\begin{equation}
		\begin{split}
			\label{nu_infty}
			&\nu^{\infty}_{\INB+,k}=       \dfrac{|d z|}{2\pi \sqrt{1+\tilde{\eta}^2k^2-k(| z|^2-1-\tilde{\eta}^2)} },\quad z\in \gamma_{+,k}\,,\\
			&\nu^{\infty}_{\INB\times,k}=\dfrac{|d z|}{2\pi \sqrt{\tilde{\eta}^2+k^2-k(| z|^2-1-\tilde{\eta}^2)} },\quad z\in \gamma_{\times,k}.
		\end{split}
	\end{equation}
	
	We summarize these results in the following Proposition.
	\begin{proposition}
		The densities  of states   of the Lax matrices $L^{(+,k)}$ and $ L^{(\times,k)}$ in \eqref{eq:inbpLadd} and \eqref{eq:inbpLmul}  endowed with the Gibbs measure  $\di\mu_{\INB}$ in \eqref{INBGibbs}, in the low temperature limit, i.e.~when $\eta,\,\beta\to\infty$, in such a way that $\eta = \wt \eta \beta$, with $\wt\eta$ in a compact set of $\R_+$,  are  given  respectively  by $\nu^{\infty}_{\INB+,k}$ and $\nu^{\infty}_{\INB\times,k}$ in \eqref{nu_infty}.
		%
	\end{proposition}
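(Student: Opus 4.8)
The plan is to reduce, in the low-temperature scaling $\eta=\wt\eta\beta$, the random Lax matrices to \emph{deterministic circulant} matrices and then to read off the limiting spectral measure from the explicit circulant eigenvalues. This mirrors the arguments for Toda in Section~\ref{sec:ExpTodaLimit} and Volterra in Section~\ref{SEC:PAR_LIM_VOLT}; the one genuinely new difficulty is that $L^{(+,k)}$ and $L^{(\times,k)}$ are \emph{not} normal at finite $\beta$, so eigenvalue convergence needs care.

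First I would establish concentration of the entries. Under $\di\mu_{\INB}$ in \eqref{INBGibbs} the $a_j$ are i.i.d.\ $\mathrm{Gamma}(\eta,\beta)$ variables, with mean $\eta/\beta=\wt\eta$ and variance $\eta/\beta^2=\wt\eta/\beta$. Hence, by the Laplace method exactly as in Section~\ref{sec:ExpTodaLimit} (the exponent $(\eta-1)\log a-\beta a$ has its unique maximiser at $a=\wt\eta$ as $\beta\to\infty$), each $a_j\rightharpoonup\wt\eta$; taking $\beta\to\infty$ first, for each fixed $N$ only finitely many variables are involved, so $\max_{j\le N}|a_j-\wt\eta|\to0$ almost surely. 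Consequently $L^{(+,k)}\to\fL^{(+,k)}$ and $L^{(\times,k)}\to\fL^{(\times,k)}$ entrywise, where $\fL^{(+,k)}=\sum_{i=1}^N(\wt\eta E_{i+k,i}+E_{i,i+1})$ and $\fL^{(\times,k)}=\sum_{i=1}^N(E_{i+k,i}+\wt\eta E_{i,i+1})$.

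The delicate step, which I expect to be the main obstacle, is transferring this entrywise convergence to convergence of the density of states, since for the non-normal matrices $L^{(+,k)},L^{(\times,k)}$ eigenvalues are unstable under perturbation (this is precisely why the finite-$\beta$ statement remains only a Conjecture). What rescues the low-temperature limit is that the targets $\fL^{(+,k)},\fL^{(\times,k)}$ are circulant, hence \emph{normal}, and are diagonalised by the \emph{unitary} discrete Fourier matrix. Since the diagonalising matrix is unitary, the Bauer--Fike theorem gives $\mathrm{dist}(\mu,\mathrm{spec}(\fL^{(\cdot,k)}))\le\|E\|_{\mathrm{op}}\le\max_j|a_j-\wt\eta|$ for every eigenvalue $\mu$ of $L^{(\cdot,k)}=\fL^{(\cdot,k)}+E$; combined with a continuity/counting argument (the $N$ eigenvalues vary continuously in $E$ and both matrices have exactly $N$ of them) this yields Hausdorff convergence of the spectra and hence weak convergence of the empirical spectral measures. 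Equivalently, one may run the locality argument of Section~\ref{sec:ExpTodaLimit} on the mixed $\ast$-moments $\frac1N\langle\tr{L^a (L^{\intercal})^b}\rangle$ (writing $L$ for either Lax matrix): each such moment depends on only $O(a+b)$ of the entries, so the prefactor $N$ cancels and the moment converges as $N\to\infty$; letting $\beta\to\infty$ then replaces every $a_j$ by $\wt\eta$ and produces the $\ast$-moments of the normal operator $\fL^{(\cdot,k)}$, which, being compactly supported in $\C$, determine its spectral measure. Either route shows the two limits commute and identifies the limiting density of states with that of the circulant matrix.

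It then remains to compute the circulant spectral measure, which is elementary. Diagonalising $\fL^{(+,k)}$ and $\fL^{(\times,k)}$ by Fourier modes gives the eigenvalues in \eqref{eigenvalues}, so as $N\to\infty$ the empirical measures converge to the pushforward of the uniform measure $\tfrac{dt}{2\pi}$ on $[0,2\pi)$ under the parametrisations $\gamma_{+,k}(t,\wt\eta,1)=e^{-it}+\wt\eta e^{ikt}$ and $\gamma_{\times,k}(t,\wt\eta,1)=\wt\eta e^{-it}+e^{ikt}$ in \eqref{eq:support_INB_limit}; this is exactly why the supports are the stated hypotrochoids. Writing the pushforward as a measure on the curve, $\tfrac{dt}{2\pi}=\tfrac{|dz|}{2\pi|\gamma'(t)|}$, I would compute $|\gamma_{+,k}'(t)|^2=1+k^2\wt\eta^2-2k\wt\eta\cos((k+1)t)$ and, using $|z|^2=1+\wt\eta^2+2\wt\eta\cos((k+1)t)$ to eliminate the angle via $\cos((k+1)t)=\tfrac{|z|^2-1-\wt\eta^2}{2\wt\eta}$, obtain $|\gamma_{+,k}'(t)|^2=1+k^2\wt\eta^2-k(|z|^2-1-\wt\eta^2)$; the analogous computation gives $|\gamma_{\times,k}'(t)|^2=\wt\eta^2+k^2-k(|z|^2-1-\wt\eta^2)$. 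These are precisely the densities $\nu^{\infty}_{\INB+,k}$ and $\nu^{\infty}_{\INB\times,k}$ in \eqref{nu_infty}, completing the proof.
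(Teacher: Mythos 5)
Your main line of argument is essentially the paper's own proof: under the scaling $\eta=\wt\eta\beta$ the i.i.d.\ Gamma entries concentrate at $\wt\eta$ (Laplace method, exactly as in Sections~\ref{sec:ExpTodaLimit} and \ref{SEC:PAR_LIM_VOLT}), the Lax matrices reduce to the deterministic circulants $\fL^{(+,k)},\fL^{(\times,k)}$, whose eigenvalues \eqref{eigenvalues} equidistribute on the hypotrochoids \eqref{eq:support_INB_limit}, and the pushforward of $\tfrac{dt}{2\pi}$ under the parametrisations gives \eqref{nu_infty}. Your elimination of the angle via $\cos((k+1)t)=\tfrac{|z|^2-1-\wt\eta^2}{2\wt\eta}$ is precisely the computation the paper states without derivation, and it is correct. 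Your route (a) --- Bauer--Fike applied to the perturbation of the \emph{normal} circulant limit (condition number $1$ for the Fourier diagonalization, $\|E\|_{\mathrm{op}}=\max_j|a_j-\wt\eta|$), combined with a Rouch\'e-type counting argument at fixed $N$ --- is a more careful treatment of the non-normality issue than the paper's terse ``the procedure is the same as in the case of Volterra'', and it correctly settles the iterated limit $\beta\to\infty$ followed by $N\to\infty$, which is what the Proposition requires.

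However, your alternative route (b) contains a genuine error: for non-normal matrices, convergence of the mixed $\ast$-moments $\tfrac1N\tr{L^a(L^\intercal)^b}$ does \emph{not} control the eigenvalue distribution. The standard counterexample is the nilpotent shift versus the circulant shift: they differ in a single corner entry, their limiting $\ast$-moments coincide (both are those of a Haar unitary), yet their empirical eigenvalue measures converge to $\delta_0$ and to the uniform measure on the unit circle, respectively. Hence the fact that the limiting $\ast$-moments of $L^{(\cdot,k)}$ agree with those of the normal operator $\fL^{(\cdot,k)}$ does not identify the limiting eigenvalue distribution of $L^{(\cdot,k)}$; extracting spectra of non-normal matrices from moment data requires Hermitization and lower bounds on small singular values, as in the circular-law literature, none of which is available here. (This is also why the paper's moment-commutation argument from Section~\ref{sec:ExpTodaLimit} transfers to Volterra, whose Lax matrix is antisymmetric with purely imaginary spectrum, but not to the INB lattices.) Consequently ``either route'' is an overstatement, and your closing claim that the two limits commute is not established --- nor could it be within this paper, since the existence of the finite-$\beta$ density of states for the INB lattices is only a Conjecture there. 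Only route (a), in the order $\beta\to\infty$ first, stands; since that is the reading of the Proposition intended by the paper, your proof survives after deleting route (b) and the commutation claim.
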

	
	\begin{remark}
		When the parameters satisfy the relation  $\wt \eta k<1$, the  curve $  \gamma_{+,k}(t,\wt\eta,1)$ is not self-intersecting, while for $\wt \eta k>1$ the curve is 
		self-intersecting. For 
		$\wt \eta k=1$ it  has cusp  singularities
		\cite{DoCarmo2016revised}.
		The limiting shape of the support as $\wt \eta\to0$ is a circle.  \\
		The same considerations are true for the curve $\gamma_{\times,k}(t,\wt \eta)$ upon
		substitution $\wt \eta\mapsto1/\wt\eta$.
		We also observe that the density of states   $\nu^{\infty}_{\INB+,k}$  ( $\nu^{\infty}_{\INB\times,k}$) in equation \eqref{nu_infty} is an Arcsin distribution for  $\wt \eta =\frac{1}{k}$ ($\wt\eta=k$) and the edges correspond to the cusps of the curve  $\gamma_{+,k}(t,\wt \eta)$  ( $\gamma_{\times,k}(t,\wt \eta)$).
	\end{remark}

	%
	%
		\color{black}
		\section{The focusing Ablowitz-Ladik lattice}
		\label{abllad}
		
		The \emph{focusing Ablowitz-Ladik lattice} is the following system of  spatial discrete differential equations 
		\begin{equation}
			\label{eq:AL}
			i	\dot{a}_j +a_{j+1}+a_{j-1}-2a_j+|a_j|^2(a_{j-1}+a_{j+1})=0\,,
		\end{equation}
		where  $a_j\in \C,\, j=1,\ldots,N$, $N\geq 3$, and  we consider periodic boundary conditions $a_{j+N} = a_j$ for all $j\in\Z$. 
		This  equation was introduced by Ablowitz and Ladik \cite{Ablowitz1974,Ablowitz1975}, by  searching integrable  spatial discretization of the cubic  non-linear Schr\"odinger Equation (NLS) for the complex function 
		$\psi(x,t)$, $x\in\R$, $t\in\R^+$
		\begin{equation}
			i \partial_t \psi(x,t) + \partial^2_{x} \psi(x,t) +  2\lvert \psi(x,t) \rvert^2 \psi(x,t)=0\,,
		\end{equation}
		In contrast with what happens in the defocusing case, the particles  $(a_1,\dots, a_N)$  are  free to explore the whole $\C^N$, which is the  phase space of the system. 
		
		On the space  $\cC^\infty(\C^N)$  we consider the Poisson bracket 
		\cite{Ercolani,GekhtmanNenciu}
		\begin{equation}
			\label{eq:poisson_bracket}
			\{f,g\} = i \sum_{j=1}^{N}\rho_j ^2\left(\frac{\partial f}{\partial \wo a_j}\frac{\partial g}{\partial a_j} - \frac{\partial f}{\partial a_j}\frac{\partial g}{\partial \wo a_j}\right)\,, \quad f,g\in\cC^\infty(\C^N).
		\end{equation} 
		
		We notice that the phase shift $a_j(t) \to e^{-2 i t}a_j(t)$ transforms the AL lattice \eqref{eq:AL} into the equation
		\begin{equation}
			\label{AL2}
			\dot{a}_j =i\, \rho_j^2(a_{j+1}+a_{j-1}),\quad  \rho_j=\sqrt{1+|a_j|^2},
		\end{equation}
		which we call the {\it reduced AL  equation}.
		We remark that  the quantity $ H_0 = 2\ln\left(\prod_{j=1}^N \rho_j^2 \right)$ is the generator of the shift $a_j(t)\to e^{-2 i t}a_j(t)$, while $H_1 = - K^{(1)} - \wo{K^{(1)}}$ with 
		\begin{equation}
			\label{K1}
			K^{(1)}:=\sum_{j=1}^{N}a_j\overline{a}_{j+1},
		\end{equation}
		generates the  flow \eqref{AL2}. Therefore, we can rewrite the AL equation as
		\begin{equation}
			\dot a_j = \{a_j, H_{AL} \}\,,\quad H_{AL} = H_0 + H_1\,.
		\end{equation}
		Moreover, it is straightforward to verify that $\{H_0,H_1\} = 0$.
		The Poisson bracket induces the symplectic form 
		\begin{equation}
			\label{eq:symplectic_form}
			\omega = i\sum_{j=1}^{N}\frac{1}{\rho_j^2}\di a_j\wedge\di \wo a_j\,,\quad \rho_j=\sqrt{1+|a_j|^2}\,,
		\end{equation} 
		that is invariant under the evolution generated by the Hamiltonians $H_0$  and $H_1$.
		Therefore, the volume form 
		\[
		\omega^N=\omega\wedge\dots \wedge \omega,
		\]
		is also invariant.
		In view of these properties,  we can define the Gibbs ensemble for the focusing Ablowitz-Ladik lattice on the phase space $\C^N$ as
		\begin{equation}
			\label{eq:Gibbs_AL}
			\di\mu_{AL} = \frac{1}{Z_{N}^{AL}(\beta)} e^{\frac{\beta}{2} H_0} \omega^N 
			=  \frac{1}{Z_{N}^{AL}(\beta)} \prod_{j=1}^N\left( 1+|a_j|^2\right)^{-\beta-1} \di^2 \balpha\,, \quad \beta >0\,,
		\end{equation}
		where $\balpha = (a_1, \ldots, a_N)$,  $ \di^2 \balpha=\prod_{j=1}^N (i\di a_j\wedge \di\overline{a}_j)$ and $Z_{N}^{AL}(\beta)$ is the normalization constant of the system. We notice that according to this measure, all the variables are i.i.d.

		\begin{remark}
			\label{rem:finitemoments}
			The measure with density $\exp(-\beta H_{AL})$ and $\beta >0$ is not bounded nor normalizable on the whole phase space. For this reason, we have defined the Gibbs ensemble as in \eqref{eq:Gibbs_AL}.
			Furthermore, we observe that the measure \eqref{eq:Gibbs_AL} has a finite number of moments, which implies that the corresponding density of states of the Lax matrix (see \eqref{eq:Lax_matrix} below), if it exists, would have a finite number of moments.
		\end{remark}
		
		The focusing AL lattice is a complete integrable system. Indeed it admits a Lax representation, first obtained by Ablowitz and Ladik from the discretization of the Zakharov-Shabat  Lax pair for the focusing non-linear Schr\"odinger equation \cite{ZS}.
		Gesztesy, Holden, Michor, and Teschl  \cite{Teschl2008} found a different Lax pair for the infinite case of  focusing  AL lattice, and for its general hierarchy.
		To adapt their construction, we double the size of the lattice  according to the periodic boundary conditions, thus we consider a chain of $2N$ particles $a_1, \ldots, a_{2N}$ such that $a_j = a_{j+N}$ for $j=1,\ldots, N$.
		Define the  $2\times2$ matrix  $\Xi_j$ 		
		\begin{equation}
			\label{eq:ximatrix}
			\Xi_j = \begin{pmatrix}
				-a_j & \rho_j \\
				\rho_j & -\wo{ a_j}
			\end{pmatrix}\, ,\quad j=1,\dots, 2N\, ,
		\end{equation}
		and the $2N\times 2N$ matrices
		\begin{equation}
			\cM= \begin{pmatrix}
				-\wo{ a_{2N}}&&&&& \rho_{2N} \\
				& \Xi_2 \\
				&& \Xi_4 \\
				&&& \ddots \\
				&&&&\Xi_{2N-2}\\
				\rho_{2N} &&&&&  -a_{2N}
			\end{pmatrix}\, ,\qquad 
			\cL = \begin{pmatrix}
				\Xi_{1} \\
				& \Xi_3 \\
				&& \ddots \\
				&&&\Xi_{2N-1}
			\end{pmatrix} \,.
		\end{equation}	
		Now  let us define the  Lax matrix 
		\begin{equation}
			\label{eq:Lax_matrix}
			\cE  = \cL \cM\,,
		\end{equation}
		that has the structure  of a $5$-band diagonal matrix 
		\[
		\begin{pmatrix}
			*&*&*&&&&&&&*\\
			*&*&*&&&&&&&*\\
			&*&*&*&*&&&&&\\
			&*&*&*&*&&&&&\\
			&&&&&\ddots&\ddots&&&\\
			&&&&&&*&*&*&*\\
			&&&&&&*&*&*&*\\
			*&&&&&&&*&*&*\\
			*&&&&&&&*&*&*\\
		\end{pmatrix}\,.
		\]
		
		The $N$-periodic equation \eqref{AL2}   is equivalent to  the following Lax equation for the matrix $\cE$,
		\begin{equation}
			\label{eq:Lax_pair}
			\dot \cE = \left[\mathcal{A},\cE\right]\,,
		\end{equation}
		where 
		
		\begin{equation}
			\mathcal{A} = \frac{i}{2}(\cE_+ - \cE_- - \cE_+^{-1} + \cE^{-1}_-)\,,  
		\end{equation}
		where the two projections $M_+,\,M_-$ are defined for a $2N\times 2N$ matrix as 
		\begin{equation}
			\label{eq:proj}
			M_+=\begin{cases}
				M_{\ell,j}\,, \quad \ell < j \leq \ell+ N \\
				M_{\ell,j}\,, \quad \ell > j + N \\
				0 \quad \textrm{otherwise}
			\end{cases}\,, \quad  M_-=\begin{cases}
				M_{\ell,j}\,, \quad j< \ell \leq j+ N \\
				M_{\ell,j}\,, \quad j  > \ell + N \\
				0 \quad \textrm{otherwise}
			\end{cases}\,.
		\end{equation}

		We notice that the Lax matrix $\cE$ has a similar structure  to the one of the defocusing AL lattice obtained by Nenciu, and Simon \cite{Nenciu2005,Simon2005}. The crucial difference is that while for  the defocusing AL lattice the blocks $\Xi_j $ are unitary matrices, for the focusing lattice this is not the case since
		$\Xi_j\Xi_j^{\dagger} \neq I_2$ where $I_2=\begin{pmatrix}1&0\\0&1\end{pmatrix}$ and  $^\dagger$ stands for hermitian conjugate.
		
		The measure $\di\mu_{AL}$ induces a probability distribution on the entries of the matrix $\cE$, thus it becomes a random matrix. As in the previous cases, one would like to connect the density of states for this random matrix to the density of states  of some $\beta$-ensemble in the high temperature regime, but, as in the case of the INB lattices, we lack of a matrix representation of  some $\beta$-ensemble with eigenvalues supported on the plane.

		We make the following observations. The matrix $\Xi_j$ \eqref{eq:ximatrix} is complex symmetric, and it can be factorized in the form
		
		\begin{equation}
			\Xi_j = U_j\begin{pmatrix}
				\frac{\wo{a_j}}{|a_j|\left(|a_j| + \sqrt{1+|a_j|^2}\right)} & 0 \\
				0 & - \frac{a_j}{|a_j|}(|a_j| + \sqrt{1+|a_j|^2})\end{pmatrix}U_j\end{equation}  
		where the matrices $U_j = \begin{pmatrix}
			\frac{a_j}{\sqrt{2}|a_j|}& \frac{1}{\sqrt{2}} \\
			\frac{1}{\sqrt{2}} & - \frac{\overline{a_j}}{\sqrt{2}|a_j|}
		\end{pmatrix}$ are unitary, $U_j^{-1} = U_j^{\dagger}$. Thus defining the matrices 
		
		\begin{equation}
			\wt\cM= \begin{pmatrix}
				- \frac{\overline{a_{2N}}}{\sqrt{2}|a_{2N}|}&&&&& \frac{1}{\sqrt{2}} \\
				& U_2 \\
				&& U_4 \\
				&&& \ddots \\
				&&&&U_{2N-2}\\
				\frac{1}{\sqrt{2}} &&&&&  \frac{a_{2N}}{\sqrt{2}|a_{2N}|}
			\end{pmatrix}\, ,\qquad 
			\wt\cL = \begin{pmatrix}
				U_{1} \\
				& U_3 \\
				&& \ddots \\
				&&&U_{2N-1}
			\end{pmatrix} \,,
		\end{equation}	
		we can rewrite the Lax matrix $\cE$ \eqref{eq:Lax_matrix} as 
		
		\begin{equation}
			\label{eq:factorization_AL}
			\cE = \wt\cL \Lambda_{\text{odd}} \wt\cL \wt\cM \Lambda_{\text{even}}\wt\cM\,,
		\end{equation}
		where, defining $c_j:=\frac{\wo{a_j}}{|a_j|\left(|a_j| + \sqrt{1+|a_j|^2}\right)}$, the matrices $\Lambda_{\text{odd}} $ and $ \Lambda_{\text{even}} $ are given by
		\begin{equation}
			\begin{split}
				&\Lambda_{\text{odd}} = \diag\left(c_1,- \frac{1}{c_1},c_3,-\frac{1}{c_3},\dots,c_{2N-1},-\frac{1}{c_{2N-1}}\right)\\
				& \Lambda_{\text{even}} = \diag\left(-\frac{1}{c_{2N}},c_2,-\frac{1}{c_2}, \ldots,c_{2N}\right)\,.
			\end{split}
		\end{equation}
		Since we are interested in the distribution of the eigenvalues of $\cE$, it follows from the factorization  \eqref{eq:factorization_AL} that we can also consider the matrix $ \Lambda_{\text{odd}} \wt\cE \Lambda_{\text{even}}\wt\cE^\intercal$, where $\wt\cE = \wt\cL\wt\cM$. The eigenvalues of $\Lambda_{\text{even}},\Lambda_{\text{odd}}$ come in pairs, such that if $\lambda$ is an eigenvalue, then also $-\lambda^{-1}$ is an eigenvalue.  The matrix $\wt\cE$ is a periodic CMV matrix \cite{Cantero2005}, thus its eigenvalues are on the unit circle. 
		
		Thus, we are in a similar  setting considered  in  \cite{WF,WKF,Guionnet2011}. Indeed in \cite{WF} the authors derived the eigenvalues distribution of $U\sqrt{D}$ where $U$ is a Haar distributed unitary matrix and $D$ is a fixed diagonal matrix with positive eigenvalues. They show that the density of states is rotational invariant and it is supported on a  single ring whose radii $r_1<r_2$ satisfy the constraint $d_{min}<r_1<r_2<d_{max}$, where $d_{min}$ and $d_{max}$ are the minimum and maximum eigenvalues of $D$. In \cite{Guionnet2011}, the authors considered a similar problem, namely the characterization of the density of states for a matrix of the form $UTV$, where $U,V$ are independent unitary matrices Haar distributed, and $T$ is a real diagonal matrix independent of $U,V$. They proved, under some mild conditions, that the density of states of the matrix $UTV$ is radially symmetric and it is supported on a ring.
		
		It is therefore reasonable to expect that the density of states of the random Lax matrix of the Ablowitz-Ladik lattice is rotational invariant, but with unbounded support, indeed the eigenvalues of $\Lambda_{\text{even}},\Lambda_{\text{odd}}$ could grow to infinity.
		\begin{figure}
			\centering
			\includegraphics[scale=0.5]{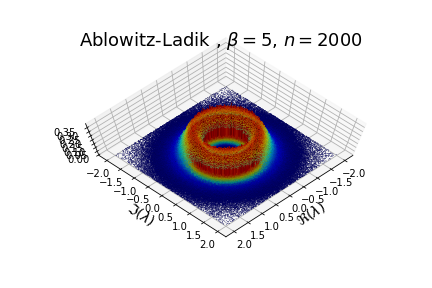}
			\includegraphics[scale=0.5]{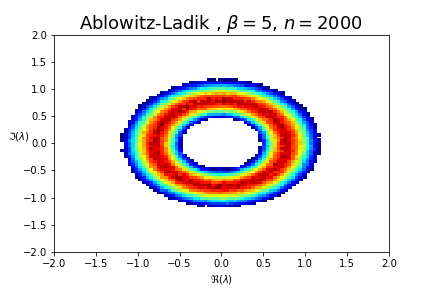}

			\includegraphics[scale=0.5]{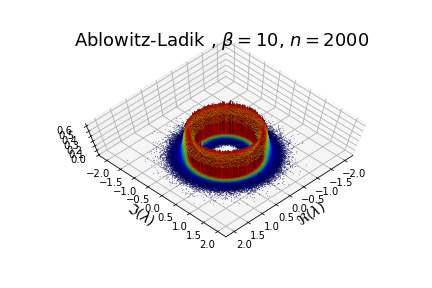}
			\includegraphics[scale=0.5]{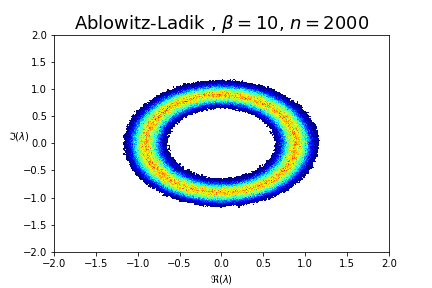}

			\includegraphics[scale=0.5]{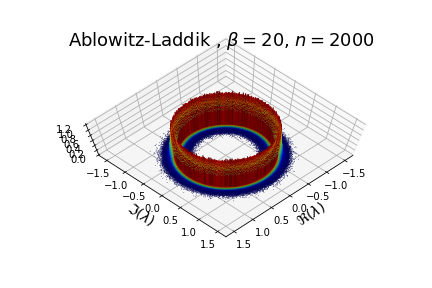}
			\includegraphics[scale=0.5]{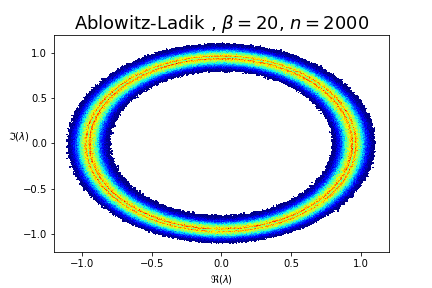}
			
			\caption{Empirical densities for the focusing Ablowitz-Ladik lattice for  $\beta = 5,10,20$, $10000$ trials per picture .}
			\label{fig:eig_density}
		\end{figure}
		
		\begin{figure}
			\centering
			\includegraphics[scale=0.5]{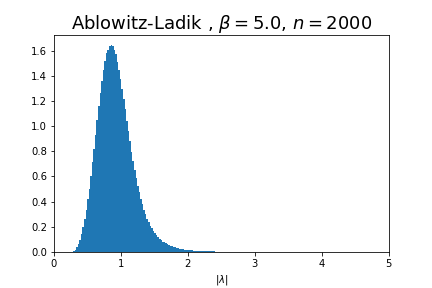}
			\includegraphics[scale=0.5]{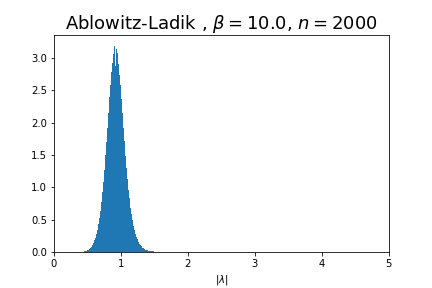}
			
			\includegraphics[scale=0.5]{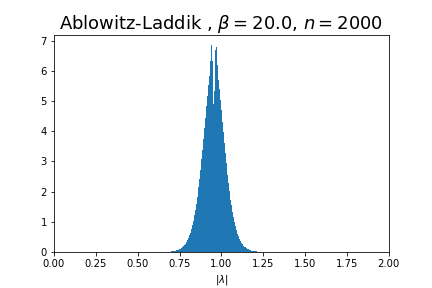}

			\caption{Empirical density for the eigenvalues' modulus for the focusing Ablowitz-Ladik lattice for  $\beta = 5,10,20$, $10000$ trials per picture.}
			\label{fig:modulo_AL}
		\end{figure}
		
		To confirm our expectations, we perform several numerical investigations of the eigenvalues of the random Lax matrix of the Ablowitz-Ladik lattice  for various  values of $\beta$ (see  Fig \ref{fig:eig_density}-\ref{fig:modulo_AL}). In Fig \ref{fig:eig_density} the eigenvalue density is shown. As expected, the density seems to be rotational invariant, and concentrated on a ring, exactly as in \cite{WF,WKF,Guionnet2011}.  For this reason, we investigate the behaviour of the modulus of the eigenvalues, see Fig \ref{fig:modulo_AL}. They seem to be concentrated in a small region, but, in view of Remark \ref{rem:finitemoments}, we expect that the tails should decay just polynomially fast. 
		
		%
		%
		
		\subsection{Parameter Limit}
		\label{subsec:limit_AL}
		Despite not being able to explicitly compute the density of states for general values of $\beta$, we can perform such an analysis in the low-temperature limit, namely when $\beta \to \infty$. 
		
		We notice that, according to \eqref{eq:Gibbs_AL}, all the $a_j$ are independent. Hence, in order to obtain the density of states in the low-temperature limit, we have to compute the weak limit of the density
		
		\begin{equation}
			\di\mu_\beta = \frac{(1+|z|^2)^{-\beta-1}\di z }{\int_\D (1+|z|^2)^{-\beta-1}\di z}\,.
		\end{equation}
		Proceeding as in the previous cases, it follows that the following limit holds for all bounded and continuous $f\;:  \wo \D \to \R$:
		
		\begin{equation}
			\lim_{\beta\to\infty}\int_\D f(z) \di\mu_\beta = f(0)
		\end{equation}
		The previous limit implies that the density of states of the Ablowitz-Ladik lattice in the low temperature limit is equal to the one of $\wh \cE = \wh \cL\wh \cM$, where $\wh \cL,\,\wh \cM$ are $2N\times 2N$ matrices
		\begin{equation}
			\wh \cM= \begin{pmatrix}
				0&&&&& 1 \\
				& \wt \Xi \\
				&& \wt \Xi \\
				&&& \ddots \\
				&&&&\wt \Xi\\
				1 &&&&&  0
			\end{pmatrix}\, ,\qquad 
			\wh\cL = \begin{pmatrix}
				\wt \Xi \\
				& \wt \Xi \\
				&& \ddots \\
				&&&\wt \Xi
			\end{pmatrix} \,,
		\end{equation}
		and $\wh \Xi$ is defined as the unitary matrix
		\begin{equation}
			\wh \Xi = \begin{pmatrix}
				0& 1 \\
				1 & 0
			\end{pmatrix}\,.
		\end{equation}
		To compute the density of states for the matrix $\cE$, we notice that both $\wh \cM$, and $\wh\cL$ are permutation matrices. Specifically, we identify them with  the following permutations
		\begin{equation}
			\begin{split}
				&\wh \cM \longleftrightarrow (2N,1)(2,3)(4,5)\ldots(2N-2,2N-1)\,,\\
				&\wh \cL \longleftrightarrow (1,2)(3,4)(5,6)\ldots (2N-1,2N)\,.
			\end{split}
		\end{equation}
		As a consequence, the matrix $\wh \cE$ itself corresponds to the permutation
		\begin{equation}
			\wh\cE \longleftrightarrow (1,3,5,\ldots,2N-1)(2,4,6,\ldots,2N)\,.
		\end{equation}
		This implies that the eigenvalues of $\wh \cE$ are all double, and given by
		$$\lambda_j = e^{2\pi i \frac{j}{N}}\,, \quad j =1, \ldots, 2N\,.$$
		From this explicit expression of the eigenvalues, it is straightforward to prove that 
		
		\begin{equation}
			\nu_{AL} = \frac{\di \theta}{2\pi}\,, \quad \theta \in [0,2\pi)\,.
		\end{equation}
		Thus, we have proved the following 
		
		\begin{proposition}
			Consider the Gibbs ensemble $\di\mu_{AL}$ \eqref{eq:Gibbs_AL} of the focusing Ablowitz-Ladik lattice in the low-temperature limit, i.e. $\beta\to\infty$. Then, the density of states $\nu_{AL}$ of the Lax matrix $\cE$ \eqref{eq:Lax_matrix} is given by
			
			\begin{equation}
				\nu_{AL} = \frac{\di \theta}{2\pi}\,, \quad \theta \in [0,2\pi)\,.
			\end{equation}
		\end{proposition}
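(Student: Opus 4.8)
The plan is to reproduce, for the non-self-adjoint matrix $\cE$, the two-step strategy already used for the exponential Toda and Volterra lattices: first show that the low-temperature limit collapses the randomness by concentrating every $a_j$ at the origin, so that $\cE$ degenerates to an explicit deterministic matrix $\wh\cE$; then diagonalize $\wh\cE$ and read off the limiting empirical spectral measure as $N\to\infty$.

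First I would analyse the single-site measure. Under $\di\mu_{AL}$ the variables $a_1,\dots,a_N$ are i.i.d.\ with density proportional to $(1+|z|^2)^{-\beta-1}$; since this density is radially decreasing with its maximum at $z=0$, a Laplace-type estimate gives $\lim_{\beta\to\infty}\int_\D f(z)\,\di\mu_\beta = f(0)$ for every bounded continuous $f\colon\wo{\D}\to\R$, which is the limit already recorded in Section~\ref{subsec:limit_AL}. Hence $a_j\rightharpoonup 0$, so $\rho_j=\sqrt{1+|a_j|^2}\rightharpoonup 1$ and each block $\Xi_j\rightharpoonup\wh\Xi=\left(\begin{smallmatrix}0&1\\1&0\end{smallmatrix}\right)$.

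Next I would justify, exactly as in Section~\ref{sec:ExpTodaLimit}, that the limits $\beta\to\infty$ and $N\to\infty$ may be interchanged. The entries of $\cE=\cL\cM$ are fixed rational expressions in the $a_j,\wo{a_j},\rho_j$, each involving only a bounded (i.e.\ $N$-independent) number of neighbouring sites; since the sites are i.i.d., the quantity $\tfrac{1}{2N}\la\tr{\cE^\ell}\ra$ depends on finitely many variables and the factor $2N$ cancels, so the relevant spectral functionals converge and the $\beta\to\infty$ limit can be passed directly to the variables. This reduces the computation to the density of states of the deterministic matrix $\wh\cE=\wh\cL\wh\cM$ obtained by replacing every $\Xi_j$ with $\wh\Xi$. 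With $a_j=0$ each block $\wh\Xi$ is a transposition, so $\wh\cL$ and $\wh\cM$ are the $2N\times2N$ permutation matrices identified in the text, and their product is the permutation $(1,3,5,\dots,2N-1)(2,4,6,\dots,2N)$, a product of two disjoint $N$-cycles. The eigenvalues of the permutation matrix of an $N$-cycle are exactly the $N$-th roots of unity, so the $2N$ eigenvalues of $\wh\cE$ are $\{e^{2\pi i j/N}\}_{j=1}^N$, each of multiplicity two; as $N\to\infty$ these equidistribute on the unit circle, yielding $\nu_{AL}=\di\theta/(2\pi)$.

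I expect the genuine difficulty to lie in the interchange of limits. The moment-cancellation heuristic is transparent, but $\cE$ is neither Hermitian nor normal and its spectrum lives in $\C$, so the ordinary moments $\tr{\cE^\ell}$ do not by themselves determine the limiting measure; a rigorous argument would have to work with mixed $*$-moments $\tr{\cE^a(\cE^\dagger)^b}$, or equivalently with the logarithmic potential of the empirical measure, and establish the required uniform control. For the low-temperature statement this subtlety is ultimately harmless, since once the variables are frozen at $z=0$ the matrix $\wh\cE$ is deterministic and its eigenvalues are computed exactly.
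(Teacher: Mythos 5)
Your proposal follows the paper's own argument essentially step for step: concentration of each $a_j$ at the origin via the Laplace-type limit $\lim_{\beta\to\infty}\int_\D f\,\di\mu_\beta=f(0)$, reduction to the deterministic matrix $\wh\cE=\wh\cL\wh\cM$, identification of $\wh\cL,\wh\cM$ as permutation matrices whose product is the pair of disjoint $N$-cycles $(1,3,\dots,2N-1)(2,4,\dots,2N)$, and equidistribution of the resulting doubled $N$-th roots of unity on the unit circle. Your closing caveat about the interchange of limits and the fact that ordinary moments do not determine the spectral measure of a non-normal matrix is actually more careful than the paper itself, which passes from the single-site limit to the deterministic matrix without comment; this is a sound refinement of the same route, not a different one.
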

		
		\section{Schur flow}
		\label{schur}
		
		The \emph{focusing Schur flow}, also known as \emph{discrete mKdV},  is an integrable system deeply related to the Ablowitz-Ladik lattice. Its equations of motion are 
		
		\begin{equation}
			\label{eq:motion_schur}
			\dot a_j = \rho_j^2(a_{j+1} - a_{j-1})\,, \quad \rho_j = \sqrt{1+|a_j|^2}\,, 
		\end{equation}
		where here we consider periodic boundary conditions, $a_{j} = a_{j+N}$ for all $j\in\Z$. Notice that if $a_j(0)\in \R$ for all $j=1,\ldots,N$, then $a_j(t)\in \R$ for all times, implying that $\R^N$ is an invariant subspace for the dynamics.
		
		Recalling the definition \eqref{K1} for $K^{(1)} = \sum_{j=1}^N a_j\wo a_{j+1}$ and introducing the Poisson bracket 
		\begin{equation}
			\{f,g\} = 
			\sum_{j=1}^{N}\rho_j ^2\left(\frac{\partial f}{\partial \wo a_j}\frac{\partial g}{\partial a_j} - \frac{\partial f}{\partial a_j}\frac{\partial g}{\partial \wo a_j}\right)\,, 
			\quad 
			f,g\in\cC^\infty(\C^N),
		\end{equation}
		we can rewrite the equations of motion \eqref{eq:motion_schur} as
		\begin{equation}
			\dot a_j = \{ a_j, H_S\}\,, \quad H_S = K^{(1)} - \wo{K^{(1)}}\,.
		\end{equation}
		
		Notice that the quantity $H_0 = -2\ln\left(\prod_{j=1}^N \rho_j^2\right)$ is a global first integral for the system. Moreover, one can deduce immediately from the equations of motion that $\R^N$ is invariant for the dynamics. Thus, in view of the Hamiltonian representation and this invariance, we define the Gibbs measure for the Schur flow as
		\begin{equation}
			\label{eq:Gibbs_schur}
			\di\mu_S = \frac{1}{Z_{N}^S(\beta)} \prod_{j=1}^N\left( 1+a_j^2\right)^{-\beta-1} \di \balpha\,,\quad a_j\in\R\,,
		\end{equation}
		where $Z_N^S(\beta)$ is the normalization constant of the system,
		\begin{equation}
			Z_N^S(\beta) = \int_{\R^N}\prod_{j=1}^N\left( 1+a_j^2\right)^{-\beta-1} \di \balpha\,.
		\end{equation}

		\begin{remark}
			Similarly to the focusing AL case, the classical Gibbs ensemble is not well-defined on the whole phase space. Indeed, the measure with density $e^{-\beta H_S}$, $\beta >0$ cannot be normalized on $\R^N$. 
		\end{remark}
		
		The Schur flow is a completely integrable system since it admits a Lax formulation. Namely, define the  $2\times2$ matrix  $\Xi_j$ 		
		\begin{equation}
			\Xi_j = \begin{pmatrix}
				-a_j & \rho_j \\
				\rho_j & -\wo{a_j}
			\end{pmatrix}\, ,\quad j=1,\dots, 2N\, ,
		\end{equation}
		and, similarly to the Ablowitz-Ladik case, the $2N\times 2N$ matrices
		\begin{equation}
			\cM= \begin{pmatrix}
				-\wo{a_{2N}}&&&&& \rho_{2N} \\
				& \Xi_2 \\
				&& \Xi_4 \\
				&&& \ddots \\
				&&&&\Xi_{2N-2}\\
				\rho_{2N} &&&&& - a_{2N}
			\end{pmatrix}\, ,\qquad 
			\cL = \begin{pmatrix}
				\Xi_{1} \\
				& \Xi_3 \\
				&& \ddots \\
				&&&\Xi_{2N-1}
			\end{pmatrix} \,.
		\end{equation}
		Then, the $N$-periodic equation \eqref{eq:motion_schur}  is equivalent to  the following Lax equation for the matrix $\cE$:	
		\begin{equation}
			\label{eq:Lax_pair_S}
			\dot \cE = \left[\cA, \cE \right]\,,
		\end{equation}	
		where
		\begin{equation}
			\cA = \frac{1}{2}\left( \cE_+ + \cE_+^{-1} - \cE_- - \cE_-^{-1} \right)\,,
		\end{equation}
		where the two projection $+,-$ are defined in \eqref{eq:proj}.  
		
		Carrying on with the approach of this article, we study the density of states $\nu_S$ for the matrix $\cE$ when the entries are distributed according to the measure  \eqref{eq:Gibbs_schur}. First, we notice that Remark \ref{rem:finitemoments} is valid also in the case of the focusing Schur flow. Moreover, since the variables $a_j$ are real, we can factorize the matrices $\Xi_j$ in the following way:
		{
			\begin{equation}
				\Xi_j =U_{0}        \diag \left(
				c_j , -\frac{1}{c_j}\right)
				U_{0},\,
				\quad c_j = \sqrt{1+a_j^2} - a_j,\,\quad
				U_{0} =
				\frac{1}{\sqrt{2}} 
				\begin{pmatrix}
					1 & 1 \\
					1 & -1
				\end{pmatrix},
			\end{equation}
			where we note that $U_{0}^{-1}=U_{0}$,
			so that the above is a similarity transformation.
			Thus, defining 
			\begin{equation} \label{eq:wtLM}
				\wt\cM= \begin{pmatrix}
					-\frac{1}{\sqrt{2}}&&&&& \frac{1}{\sqrt{2}} \\
					& U_{0} \\
					&& U_{0} \\
					&&& \ddots \\
					&&&&U_{0}\\
					\frac{1}{\sqrt{2}} &&&&&  \frac{1}{\sqrt{2}}
				\end{pmatrix}\, ,\qquad 
				\wt\cL = \begin{pmatrix}
					U_{0} \\
					& U_{0} \\
					&& \ddots \\
					&&&U_{0}
				\end{pmatrix} \,,
			\end{equation}
			we can rewrite the Lax matrix of the Schur flow $\cE$ as}
		\begin{equation}
			\label{eq:factorization_Schur}
			\cE = \wt\cL \Lambda_{\text{odd}} \wt\cL \wt\cM \Lambda_{\text{even}}\wt\cM\,,
		\end{equation}
		where
		\begin{equation}
			\begin{split}
				&\Lambda_{\text{odd}} = \diag\left(c_1, -\frac{1}{c_1},c_3,\ldots\right)\,,  \\ & \Lambda_{\text{even}} = \diag\left(-\frac{1}{c_{2N}},c_2,\ldots, c_{2N}\right)\,.
			\end{split}
		\end{equation}
		As in the case of the Ablowitz-Ladik lattice, since we are interested in just the distribution of the eigenvalues of  $\cE$, we can consider the matrix $ \Lambda_{\text{odd}} \wt\cE \Lambda_{\text{even}}\wt\cE^\intercal$, where 
		\begin{equation}
			\label{EtildeSchur}
			\wt\cE = \wt\cL\wt\cM.
		\end{equation}
		As in the AL case, the eigenvalues of $\Lambda_{\text{even}},\Lambda_{\text{odd}}$ come in pairs, such that if $\lambda$ is an eigenvalue, then also $-\lambda^{-1}$ is an eigenvalue. The main difference with the case of the focusing AL lattice is that in this case the matrix $\wt\cE$ is \emph{deterministic}. Thus, one can be led to think that the eigenvalue distribution of the Schur flow would be similar to the one of the AL lattice, but it is not the case.
		Indeed, we perform several numerical investigations, reported in Fig \ref{fig:eig_schur}, which shows that the behaviour of the eigenvalues is different in the two situations. 
		
		\begin{figure}[ht]
			\centering
			\includegraphics[scale=0.4]{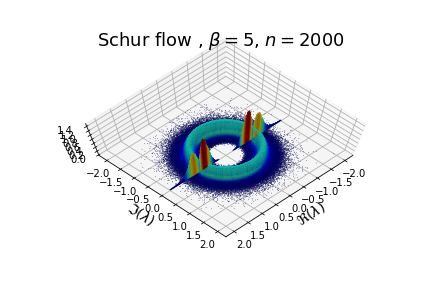}
			\includegraphics[scale=0.4]{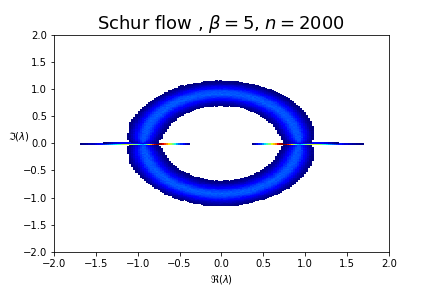}
			
			\includegraphics[scale=0.4]{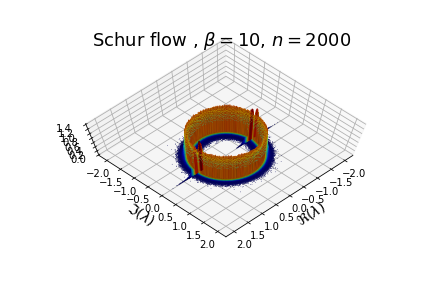}
			\includegraphics[scale=0.4]{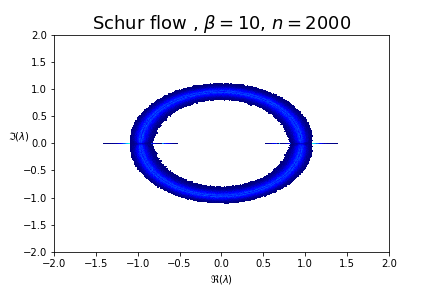}
			
			\includegraphics[scale=0.4]{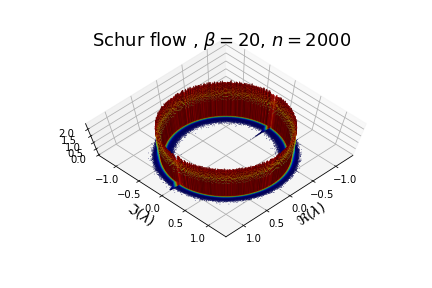}
			\includegraphics[scale=0.4]{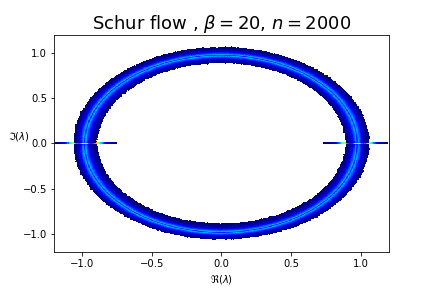}
			
			\caption{Schur flow eigenvalue density for $\beta=5,10,20$, $10000$ trials.}
			\label{fig:eig_schur}
		\end{figure}

		We notice that a consistent part of the eigenvalues tend to stay close to the real axis, see Fig \ref{fig:eig_schur}. This behaviour is also typical of the orthogonal Ginibre ensemble \cite{Edelman1997}. The main reason is that the eigenvalues of $\wt \cE$ are not evenly spaced on the unit circle, but they are constrained to the left semicircle, and are more dense nearby $\pm i$  (see Fig~\ref{fig:constant_matrix}). Indeed we can give an accurate description of the spectrum of this matrix.
		
		More precisely, we have the following.
		
		\begin{proposition} \label{PROP:ESPECTRUM}
			Let $\wt\cE$ be the   $2N \times 2N$ matrix   defined in \eqref{EtildeSchur}.
			Its eigenvalues are the solutions of the quadratic equations
			\begin{equation} \label{eq:wtceEigenvalues}
				\lambda+\frac{1}{\lambda}+1 = \cos\left(\frac{2 \pi j}{N}\right), \quad \quad j=0,1,\dots,N-1,
			\end{equation}
			counting the multiplicity.
		\end{proposition}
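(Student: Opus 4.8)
The plan is to diagonalize $\wt\cE=\wt\cL\wt\cM$ by exploiting its translation invariance, reducing the spectral problem to a family of $2\times2$ Bloch matrices indexed by the $N$-th roots of unity. The starting observation is that $U_{0}^{2}=I$, so $\wt\cL$ and $\wt\cM$ are real orthogonal involutions and $\wt\cE$ is real orthogonal; in particular its eigenvalues lie on the unit circle, which is consistent with the fact that the solutions of \eqref{eq:wtceEigenvalues} are unimodular. The structural fact I would establish first is that, when the corner block of $\wt\cM$ in \eqref{eq:wtLM} is read in the cyclic ordering $(2N,1)$, it is exactly $U_{0}$; hence $\wt\cM$ is block-diagonal with a copy of $U_{0}$ on every pair $(2k,2k+1)$ taken modulo $2N$, while $\wt\cL$ is block-diagonal with a copy of $U_{0}$ on every pair $(2k-1,2k)$. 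Consequently $\wt\cE$ commutes with the cyclic shift $T$ by two sites, and $T^{N}=I$.

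Next I would introduce, for $z=e^{2\pi i j/N}$ with $j=0,\dots,N-1$, the plane-wave subspace spanned by vectors of the form $v_{2k-1}=z^{k}\alpha$, $v_{2k}=z^{k}\beta$, which is two-dimensional and $T$-invariant; the $N$ such subspaces decompose $\C^{2N}$. A direct substitution shows this Bloch form is preserved by $\wt\cL$ and $\wt\cM$ and that, on the amplitudes $(\alpha,\beta)$, the map $\wt\cM$ acts as $\wh M(z)=\tfrac{1}{\sqrt2}\begin{pmatrix}-1&z^{-1}\\ z&1\end{pmatrix}$ (the off-diagonal powers of $z$ encoding the shift across unit cells and the periodic boundary), while $\wt\cL$ acts as the constant matrix $U_{0}$. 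Therefore the restriction of $\wt\cE=\wt\cL\wt\cM$ to the mode-$z$ subspace is the $2\times2$ matrix $\wh E(z)=U_{0}\,\wh M(z)$.

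A short computation then gives $\wh E(z)=\tfrac12\begin{pmatrix} z-1 & z^{-1}+1\\ -z-1 & z^{-1}-1\end{pmatrix}$, so that $\trace{\wh E(z)}=\tfrac12(z+z^{-1})-1=\cos(2\pi j/N)-1$ and $\det\wh E(z)=1$. The characteristic equation of $\wh E(z)$ is thus $\lambda^{2}-\bigl(\cos(2\pi j/N)-1\bigr)\lambda+1=0$, and dividing by $\lambda\neq0$ yields precisely $\lambda+\tfrac1\lambda+1=\cos(2\pi j/N)$. Letting $j$ run over $0,\dots,N-1$ produces all $2N$ eigenvalues, counting multiplicity (note that $j$ and $N-j$ give the same quadratic), which is the claim.

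The only delicate point --- and the step I would treat most carefully --- is the bookkeeping of the corner block, i.e.\ checking that reading $\wt\cM$ in the wrapped ordering $(2N,1)$ genuinely produces $U_{0}$ and not a boundary-twisted version of it, so that the Fourier reduction is exact with no surface correction. Once this cyclic $2$-periodicity is secured, the remainder is the routine $2\times2$ trace/determinant computation above; the appearance of $\det\wh E(z)=1$ also re-confirms that the eigenvalues are unimodular, matching the orthogonality of $\wt\cE$.
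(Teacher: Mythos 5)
Your proof is correct, and it takes a genuinely different route from the paper's. You diagonalize $\wt\cE$ via the discrete Fourier (Bloch) decomposition attached to its block-circulant structure: after the key bookkeeping check that the corner block of $\wt\cM$, read in the cyclic ordering $(2N,1)$, is exactly $U_0$ (it is: the wrapped block is $\frac{1}{\sqrt2}\begin{pmatrix}1&1\\1&-1\end{pmatrix}$), the $N$ plane-wave subspaces are invariant, and your symbol $\wh E(z)=\frac12\begin{pmatrix} z-1 & z^{-1}+1\\ -z-1 & z^{-1}-1\end{pmatrix}$ agrees with $\frac12\left(E_0+E_1z+E_{-1}z^{-1}\right)$ for the blocks $E_0,E_{\pm1}$ the paper itself writes down; the trace/determinant computation then yields the quadratic directly. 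The paper, by contrast, only observes the block-circulant structure and then proceeds by a transfer-matrix argument: it introduces $\Omega(\lambda)$, constructs explicit candidate eigenvectors from the rows of $\Omega(\lambda)^k$, verifies them using $\Omega(\lambda)^N=I_2$, and must treat $\lambda=\pm i$ (your $j=0$ mode) separately because $\Omega(\pm i)$ is not diagonalizable, together with a counting argument for distinctness and multiplicity. Your approach buys uniformity and automatic completeness --- all $j$, including $j=0$, are handled identically, and the multiplicity count is immediate since the $N$ two-dimensional invariant subspaces decompose $\C^{2N}$ --- while the paper's approach produces explicit eigenvectors and avoids introducing the Fourier decomposition, at the cost of case distinctions and a less transparent exhaustiveness argument.
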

		%
		%
		\begin{proof}
			See Appendix \ref{app:propProof}.
		\end{proof}
		
		\begin{remark}
			From equation \eqref{eq:wtceEigenvalues} we can infer the limiting distributions of the eigenvalues of $\wt \cE$. We already know all of its eigenvalues lie in the unit circle, hence we can write $\lambda = e^{i \varphi}$ for some $\varphi \in [-\pi,\pi)$. Equation \eqref{eq:wtceEigenvalues} thus becomes
			\begin{equation}
				e^{i \varphi}+e^{-i \varphi}+1=\cos\left(\frac{2 \pi j}{N}\right) \quad \iff \quad \varphi=\arccos\left(\frac 12 \cos\left(\frac{2 \pi j}{N}\right) -\frac12 \right).
			\end{equation}
			Passing to the limit $N\to \infty$, by standard methods, we can compute the limiting density of the argument $\varphi$ of the eigenvalues as
			\begin{equation}
				\mu(\varphi)\d \varphi = \left( \mathbbm{1}_{[\frac{\pi}{2},\pi]}(\varphi) -  \mathbbm{1}_{[-\pi,-\frac{\pi}{2}]}(\varphi) \right) \frac{\sin{\varphi} \,\d\varphi}{\pi \sqrt{1-\left(1+2 \cos{\varphi}\right)^2}}
			\end{equation}
			This behaviour is shown in Fig \ref{fig:constant_matrix}.
		\end{remark}

		\begin{figure}
			\centering
			\includegraphics[scale=0.3]{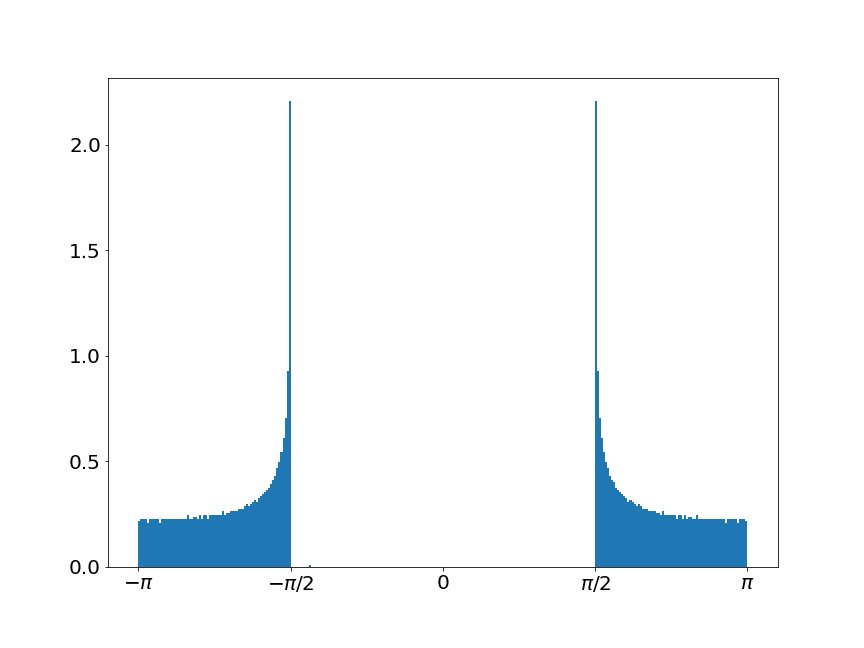}
			\caption{Distribution of the eigenvalues arguments for the $\wt \cE$ \eqref{EtildeSchur}, $ N=5000$.}
			\label{fig:constant_matrix}
		\end{figure} 
		Furthermore, as in the case of the AL lattice, for large $\beta$ the eigenvalues tend to accumulate on the unit circle, see Fig \ref{fig:eig_schur}. In a similar way as done for the focusing AL lattice  we conclude that the density of states of the random Lax matrix $\cE$  in \eqref{eq:Lax_matrix}  with probability distribution entries given by the Gibbs measure
		$\di\mu_{S}$ in  \eqref{eq:Gibbs_schur}, converge in the limit $\beta\to\infty$ to the uniform measure on the unit circle.
		%
		%
		%
		%
		%
			%

			\section{Conclusions and outlook}
			\label{concl}
			
			In this manuscript, we underlined the deep relation between integrable systems and random matrix theory. Specifically, we showed that the exponential Toda lattice, and the Volterra one are related to the Laguerre $\beta$-ensemble at high temperature and the antisymmetric Gaussian $\beta$-ensemble  at high temperature respectively. As we already mentioned in the introduction, these are not isolated and lucky relations, indeed the Toda lattice is related to the Gaussian $\beta$-ensemble at high temperature, and the Ablowitz-Ladik one to the Circular $\beta$ ensemble. Thus, in view of the new relations that we obtained, we conclude that each classical $\beta$-ensemble in the high temperature regime is related to some integrable model, see Table \ref{table:ensemble_integrable}. 
			
			Furthermore, we numerically investigate the additive and multiplicative INB lattices, which are a generalization of the Volterra one. Since the former lattice is related to the antisymmetric Gaussian $\beta$-ensemble, it may be possible that the INB lattice could be related to some generalization of the antisymmetric Gaussian $\beta$-ensemble themselves.  
			
			Finally, we considered the focusing Ablowitz-Ladik lattice and the focusing Schur flow. In view of our numerical investigations, we were able to formulate a precise conjecture regarding the eigenvalues distribution of the focusing Ablowitz-Ladik lattice. Considering the focusing Schur flow, we were not able to formulate a precise conjecture for its eigenvalues distribution.  The expectation is that it would be related to the a possible Ginibre $\beta$-ensemble.
			
			It would be fascinating to consider the generalized Gibbs ensemble for the exponential Toda lattice and the Volterra one, and prove that both their empirical measures satisfy a Large Deviation principle in the spirit of \cite{guionnet2021large,mazzuca2022large}. Moreover, it would be interesting to apply the theory of Generalized Hydrodynamic \cite{DoyonNone} to have some insight regarding the behaviour of the correlation functions for these systems.
			\begin{table}[ht]
				\centering
				\begin{tabular}{|c|c|}
					
					\hline
					\textbf{$\beta$-ensemble at high temperature} & \textbf{Integrable System} \\
					\hline
					Gaussian &  Toda lattice \\
					\hline
					Laguerre & Exponential Toda lattice\\
					\hline
					Jacobi & Defocusing Schur flow \\
					\hline
					Circular & Defocusing Ablowitz-Ladik lattice\\
					\hline
					Antisymmetric Gaussian & Volterra lattice \\
					\hline
				\end{tabular}
				\caption{$\beta$-ensembles and integrable systems}
				\label{table:ensemble_integrable}
			\end{table}
			
			\appendix
			
			\section{Proof of Proposition \ref{PROP:ESPECTRUM}} \label{app:propProof}
			
			Recall that the matrix $\wt\cE$ is defined as $\wt\cE = \wt\cL\wt\cM$ where $\wt\cL$ and $\wt\cM$ are as in \eqref{eq:wtLM}. It is a block circulant matrix, indeed we can write
			\begin{equation}
				\wt\cE=\frac12 \begin{pmatrix} E_0 & E_1 & & &E_{-1} \\
					E_{-1} & E_0 & E_1 & &\\
					& \ddots & \ddots & \ddots &\\
					&& \ddots & \ddots &  E_1  \\
					E_1 &&&  E_{-1} &E_0 \end{pmatrix},   
			\end{equation}
			with
			\begin{equation}
				E_0=\begin{pmatrix} -1 & 1 \\ -1 & -1 \end{pmatrix}, \qquad
				E_{-1}=\begin{pmatrix} 0 & 1 \\ 0 & 1 \end{pmatrix}, \qquad
				E_1=\begin{pmatrix} 1 & 0 \\ -1 & 0 \end{pmatrix}.
			\end{equation}
			One can immediately check that $\lambda=\pm i$ are eigenvalues  for $\wt\cE$ with eigenvectors
			\begin{equation}
				v_{\pm i} = \left(\mp i,1, \mp i,1,\dots,\mp i, 1 \right)^\intercal.
			\end{equation}
			We now claim that, for fixed $N$, the remaining eigenvalues have multiplicity 2 and are the $(N-1)$ solutions to $\Omega\left(\lambda \right)^N =I_2$, where we defined
			\begin{equation} \label{eq:omega}
				\Omega\left(\lambda \right)=\begin{pmatrix} \lambda & -\lambda-1 \\ -\lambda-1 &\frac{\lambda^2+2\lambda+2}{\lambda} \end{pmatrix}.
			\end{equation}
			Such solutions are obtained by solving the equation
			\begin{equation} \label{eq:omegaPar}
				\lambda + \frac 1\lambda +1 = \cos\left(\frac{2 \pi j}{N}\right)	\qquad for\;\ j=0,\dots,\left\lfloor \frac{N}{2} \right\rfloor\,.
			\end{equation}
			For $j=0$ the solutions to \eqref{eq:omegaPar} are $\pm i$ which we already treated separately. Indeed $\Omega\left(\pm i \right)$ is not diagonalizable and $\Omega\left(\pm i \right)^N \neq I_2$ for every $N$ greater than 0. For any other $j\in\{1,\dots,\left\lfloor \frac{N}{2} \right\rfloor\}$,  the solutions to \eqref{eq:omegaPar} are
			\begin{equation} \label{lambdas}
				\lambda_{1,2} = \frac{\cos\left(\frac{2 \pi j}{N}\right)-1}{2} \pm i \frac{\sqrt{3+2\cos\left(\frac{2 \pi j}{N}\right)-\cos^2\left(\frac{2 \pi j}{N}\right)}}{2}.
			\end{equation}
			Since both the real and imaginary part are monotone functions of  $j$,  different $j's$ will correspond to  different solutions. Hence, if $N$ is odd, we will have a total of $N-1$ solutions coming from \eqref{lambdas}; if $N$ is even one has  $N-2$ distinct solutions 
			coming from the equation in \eqref{eq:omegaPar} plus the double solution $\lambda=-1$ obtained for $j=N/2$.
			

			For a given eigenvalue $\lambda$, the corresponding independent eigenvectors are
			\begin{align}
				v_1 & = \left((1,0) \Omega\left(\lambda \right), \dots,\,(1,0)\Omega\left(\lambda \right)^{N-1},  1, 0\right)^\intercal,	\\
				v_2 &= \left( (0,1)\Omega\left(\lambda \right), \dots,\,(0,1)\Omega\left(\lambda \right)^{N-1},0,1  \right)^\intercal\,.
			\end{align}
			Let us check the correctness of the claim. Write $\wt\cE v_1 :=(w_1, \dots, w_N)^\intercal$,   where $w_j$ are two-dimensional row vectors,
			then using the fact that $\Omega\left(\lambda\right)^N=I_2$, one can compute for any $k=1,\dots,N$,
			\begin{align}
				w_k^\intercal &= \frac12 \left( E_{-1}\, \Omega\left(\lambda \right)^{-1} + E_0 + E_1\,  \Omega\left(\lambda \right) \right)\Omega\left(\lambda \right)^k \begin{pmatrix} 1 \\ 0 \end{pmatrix}	\\
				&= \frac12 \left( \begin{pmatrix} \lambda +1 & \lambda \\ \lambda +1 & \lambda  \end{pmatrix} + \begin{pmatrix} -1 & 1 \\ -1 & -1 \end{pmatrix} + \begin{pmatrix} \lambda  & -\lambda -1 \\ -\lambda  & \lambda +1 \end{pmatrix}\right) \Omega\left(\lambda \right)^k \begin{pmatrix} 1 \\ 0 \end{pmatrix}	\\
				&=\lambda \cdot \Omega\left(\lambda \right)^k \begin{pmatrix} 1 \\ 0 \end{pmatrix},
			\end{align}
			which  shows that $v_1 $ is an eigenvector with eigenvalue $\lambda$. The same proof clearly applies to the other eigenvector $v_2$. 

			\section{Confluent Hypergeometric Functions}
			\label{appendix}
			\subsection{Tricomi Confluent Hypergeometric function}
			
			The Tricomi Confluent Hypergeometric function $\psi(a,b;z)$, that we introduced in Theorem \ref{thm:mazzuca}, is defined as the solution of the Kummer's equation
			
			\begin{equation}
				\label{eq:kummer}
				z \frac{\d^2 \psi}{\d z^2} + (b-z)\frac{\d \psi}{\d z} - a \psi = 0\,,
			\end{equation}
			such that 
			\begin{equation}
				\psi(a,b;z) \sim z^{-a} \,, \quad z\to\infty\,, |\arg(z)| \leq \frac{3}{2}\pi - \delta\,,
			\end{equation}
			here $\delta$ is a small positive constant. The origin is a regular singular point for this equation with indexes $0$ and $1-b$, moreover, this equation has  an irregular singularity at infinity of rank one. Usually, $\psi(a,b;z)$ has a branch point at $0$, whose principal branch is the same as $z^{-a}$, thus it has a cut on the $z$ plane along the interval $(-\infty,0]$.
			
			\subsection{Whittaker function} 
			The Whittaker function $W_{\mu,\kappa}(z)$, that we introduced in Theorem \ref{thm:guido_peter}, is defined as the solution to the Whittaker's equation
			
			\begin{equation}
				\frac{\d^2 W}{\d z^2} + \left(-\frac{1}{4} + \frac{\kappa}{z} + \frac{1-4\mu^2}{4z^2} \right)W = 0\,,
			\end{equation}
			such that
			\begin{equation}
				W_{\mu,\kappa}(z) \sim e^{-\frac{z}{2}}z^{\kappa}\,,   \quad z\to\infty\,, |\arg(z)| \leq \frac{3}{2}\pi - \delta\,,
			\end{equation}
			here $\delta$ is a small positive constant. The Whittaker's equation is obtained from Kummer's equation \eqref{eq:kummer} via the substitution $W = e^{-\frac{z}{2}}z^{\frac{1}{2} + \mu}\psi(a,b;z)$, with $\kappa = \frac{1}{2}(b-a)$, and $\mu = \frac{1}{2}(b-1)$. It has a regular singularity at $0$ with index $\frac{1}{2}\pm \mu$, and an irregular singularity at infinity of rank $1$. Moreover, the following equality holds
			
			\begin{equation}
				W_{\mu,\kappa}(z) = e^{-\frac{z}{2}}z^{\frac{1}{2} +\mu}\psi\left(\frac{1}{2} + \mu -\kappa , 1 + 2\mu; z\right)\,.
			\end{equation}
			
			For a more general overview on Confluent Hypergeometric functions, we refer to \cite[§13]{dlmf}, and \cite[Chapter 13]{Abramowitz1964}.

			\section*{Acknowledgements}
			
			This work was made in the framework of the Project ``Meccanica dei
			Sistemi discreti'' of the GNFM unit of INDAM.
			This material is based upon work supported by the National Science Foundation under Grant No. DMS-1928930 while the authors participated in a program hosted by the Mathematical Sciences Research Institute in Berkeley, California, during the Fall 20-21 semester  "Universality and Integrability in Random Matrix Theory and Interacting Particle Systems".
			M.G.,  T.G. and  G.M.  acknowledge the   Marie Sklodowska–Curie grant No. 778010 IPaDEGAN and the support of GNFM-INDAM group. 
			TG and GM acknowledge the hospitality and support of the   Galileo Galilei Institute,
			programme     "Randomness, Integrability, and Universality".
			
			For the numerical simulations that we presented, we made extensive use of the \texttt{NumPy} \cite{numpy}, \texttt{seaborn} \cite{seaborn}, and \texttt{matplotlib} \cite{matplotlib} libraries.
			
			\section*{Data Availability Statement}
			The python program to generate figures \ref{fig:INB_additive1}--\ref{fig:constant_matrix} can be found at \cite{densityrepo}.
			
			\section*{Conflict of Interest Statement}
			The authors declare no conflict of interest.
			
			\section*{Funding }
			This work was supported bu  Marie Sklodowska–Curie grant No. 778010 IPaDEGAN 
			Moreover, G.M. is financed by the KAM grant number 2018.0344.
			
			\bibliographystyle{siam}
			\bibliography{esplorativo.bib}

		\end{document}